\documentclass[12pt]{amsart}

 \usepackage[margin=1in]{geometry}





\newcommand{\private}[1]{}
\oddsidemargin=0in
\evensidemargin=0in










\usepackage{amssymb, amsmath, amscd, amsthm, color, epsfig,url, graphicx, url}

\usepackage[all]{xy}          
\xyoption{dvips}              

\usepackage{subcaption}
\usepackage{caption}

\setcounter{tocdepth}{2}

\makeatletter
\renewcommand\l@subsection{\@tocline{2}{0pt}{2pc}{5pc}{}}
\makeatother











\newcommand{\PolStr}{\operatorname{PolStr}}
\newcommand{\SCom}{\operatorname{SCom}}
\newcommand{\stab}{\operatorname{stab}}
\newcommand{\via}{\operatorname{via}}

\newcommand{\st}{\operatorname{st}}
\newcommand{\wstab}{\operatorname{wstab}}
\newcommand{\type}{\operatorname{type}}



\theoremstyle{plain}
\newtheorem{thm}{Theorem}[section]
\newtheorem{prop}[thm]{Proposition}

\newtheorem{cor}[thm]{Corollary}
\newtheorem{conj}[thm]{Conjecture}

\theoremstyle{definition}
\newtheorem{defin}[thm]{Definition}
\newtheorem{example}[thm]{Example}
\newtheorem{def/ex}[thm]{Definition/Example}

\theoremstyle{remark}
\newtheorem{rem}[thm]{Remark}

\usepackage{tikz}

\makeatletter
\@namedef{subjclassname@2020}{%
  \textup{2020} Mathematics Subject Classification}
\makeatother

\usepackage{graphicx}

\begin{document}
\pagestyle{plain}
\title{Stability of political structures modeled by simplicial complexes under mediation, splitting, and shellability}
\author{Du\v sko Joji\'c}
\address{Faculty of Natural Sciences and Mathematics, University of Banjaluka}
\email{dusko.jojic@pmf.unibl.org}

\author{Franjo \v Sar\v cevi\'c}
\address{Faculty of Civil Engineering, University of Rijeka}
\email{franjo.sarcevic@uniri.hr}
\urladdr{https://sites.google.com/view/franjosarcevic}

\subjclass[2020]{05A99, 91C99, 91F10}
\keywords{political structure; simplicial complex; stability; mediator; splitting; shellable complex; independence complex}

\begin{abstract}
Modeling political structures by simplicial complexes, we investigate whether introducing a mediator into a substructure increases or decreases the stability of the overall structure. We prove theorems that quantify the stability of a political structure when $n$ mediators are introduced, either one by one or simultaneously. We also examine how the stability is affected when a single agent is split into two. In addition, stability is expressed in terms of the $h$-vector, and special attention is given to a class of political structures modeled by shellable simplicial complexes. In the latter context, we analyze weighted political structures and examples of political structures modeled by independence complexes of graphs. This approach provides a rigorous, stepwise analysis of stability under different structural modifications, showing how the combinatorial and topological properties of the simplicial complex govern the structure’s stability.
\end{abstract}

\maketitle


\tableofcontents


\parskip=5pt
\parindent=0cm




\begin{section}{Introduction}
Building on the work presented in \cite{Keiding}, the authors in \cite{MV21} model political structures using simplicial complexes. Modeling political structures in this way is natural, given that these complexes -- specifically, their constituent elements, the simplices, of which they are built by gluing in a prescribed manner -- are well-suited for representing situations involving mutual interactions between objects. Two interacting objects form a 1-simplex, i.e., an edge; three interacting objects form a 2-simplex, i.e., a (full) triangle; four interacting objects form a 3-simplex, i.e., a (full) tetrahedron; and so on.

Such use of simplicial complexes is present in many fields, from topological data analysis \cite{Gunnar, Munch, Otter} and signal processing \cite{Barba, Chad, Schaub} to the mathematical methods in social sciences \cite{Egan, Faridi, Hansen, Iacopo, Martino, Moore}.


In the specific case that is relevant to us here, the authors in \cite{MV21} establish a natural correspondence (see the text following Definition \ref{D:simplcompl}) between the political structure (Definition \ref{D:politstr}) on the one hand and the simplicial complex (Definition \ref{D:simplcompl}) on the other. 


Expressed in the language of category theory, there is an equivalence 
$$\PolStr\simeq \SCom$$
between the category $\PolStr$ of political structures (with \textit{political structure maps} 
as morphisms) and the category $\SCom$ of simplicial complexes (with \textit{simplicial maps} as morphisms).

In this way, problems from political structures are ``translated" into the language of simplicial complexes. This approach enables us to formulate concepts such as the stability of a political structure, merging of political structures, mediation, delegation or compromise in a more mathematically rigorous and operational manner, and to prove as \textit{theorems}, for example, that merging agents from disconnected components of a political structure increases the stability of that structure, that introducing a mediator increases the stability of a political structure, among others. Some facts can be expressed in the language of algebraic topology, using homology groups \cite[Sections 3.5. and 4.]{MV21}, \cite[Sections 2.4. and 3.4.]{Wang}.

In this paper, we investigate the effects of introducing a mediator into a substructure of a political structure, motivated by an open question posed after \cite[Proposition 3.15]{MV21}, as well as the effects of introducing $n$ mediators into a political structure. We also investigate what happens when instead of merging agents we have splitting of an agent - in other words, when a vertex in a simplicial complex is split into two new vertices. Such situations are natural to observe, bearing in mind, for example, the splitting of political parties into new parties. Furthermore, we describe the stability of a political structure modeled by a simplicial complex via the $h$-vector of the complex, with particular attention to the case of shellable complexes, which are considered primarily for their topological and combinatorial properties, but we show that they also possess substantive social-scientific relevance.

In the end, we model political structures as independence complexes of graphs, and illustrate this with several examples, showing how tools from topological combinatorics can be applied to investigate their stability.

This stepwise, combinatorial, and topological approach provides a rigorous framework to quantify stability in political structures modeled by simplicial complexes, with previous works -- supplemented here by new results -- demonstrating the relevance and applicability of this methodology.

\end{section}

\begin{section}{Overview of some basic concepts and results}\label{S:Basics}
The cited literature - especially \cite{Keiding, Brooks, MV21} - provides a detailed introduction to the terminology of political structures, as well as the necessary background on simplicial complexes. Accordingly, here we will limit ourselves to a brief review of key concepts and results that are directly relevant to the problems under consideration, and construct several illustrative examples.

\begin{defin}\label{D:politstr}
A \textit{political structure} $P = (A, \mathcal{C})$ is an ordered pair consisting of a finite set $A$ of \textit{agents} and a collection $\mathcal{C}$ of \textit{viable configurations} -- subsets of $A$ that satisfy two natural conditions: (i) each agent individually forms a viable configuration, and (ii) every subset of a viable configuration is a viable configuration.
\end{defin}

\begin{defin}\label{D:simplcompl}
A finite \textit{(abstract) simplicial complex} $K = (V, \Delta)$ is an ordered pair consisting of a finite set $V$ of \textit{vertices} and a collection $\Delta$ of \textit{simplices} -- subsets of $V$ such that a subset of a simplex is a simplex, where the vertices can be treated as $0$-simplices. 
\end{defin}

Any simplex in the complex is also called a \textit{face} of the complex.

The abstract simplicial complex can be identified with its \textit{geometric realization} without risk of ambiguity; hence, we will simply use the term \textit{simplicial complex}.

A political structure can be modeled by a simplicial complex via a natural correspondence between $P$ and $K$, in which agents are mapped to vertices and viable configurations to simplices.



In this way, a \textit{political structure map} between two political structures, that sends agents to agents and viable configurations to viable configurations, corresponds to a \textit{simplicial map} between the associated simplicial complexes, that sends vertices to vertices and simplices to simplices.

Since the political structure $P$ is modeled by a simplicial complex $K$, we will, from this point on, use the notations $P$ and $K$, as well as the terms \textit{political structure} and \textit{simplicial complex}, interchangeably. That is, when we say that $P$ is a political structure, we implicitly mean the simplicial complex that models it. Conversely, we may refer to the political structure simply as the simplicial complex $P$.

\begin{example}\label{E:structure1}
Figure \ref{F:politstr1} depicts a political structure with six agents $\{v_0,v_1,\ldots,v_5\}$, in which agents $v_0,v_1,v_2$ are mutually compatible, agents $v_2$ and $v_3$, as well as $v_3$ and $v_4$, and $v_2$ and $v_4$ are compatible in pairs, but they cannot all be together in the same coalition, and agent $v_5$ is isolated - it is not compatible with any other agent in the structure.
\end{example}

\begin{example}\label{E:structure2}
Figure \ref{F:politstr2} depicts a political structure with five agents $\{v_0,v_1,\ldots,v_4\}$, where agents $v_1,v_2,v_3,v_4$ are all compatible among themselves (they form a full tetrahedron), but agent $v_0$ is compatible only with $v_1$ - it can form a coalition exclusively with $v_1$ and with no other agent in the structure.
\end{example}

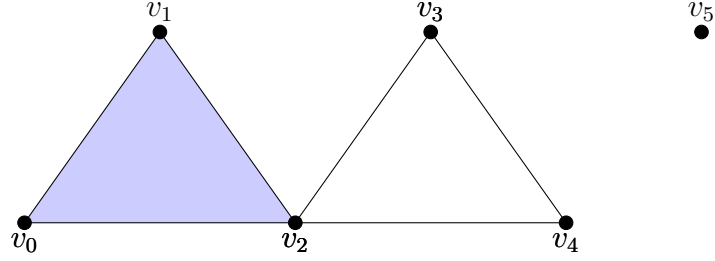
\begin{figure}
\begin{center}
\begin{tikzpicture}[scale=0.9] 
\begin{scope}
\fill[blue!20] (0,0) -- (4,0) -- (2, 2.82);
\end{scope}
\draw[fill] (0,0) circle [radius=0.1]; \draw[fill] (4,0) circle [radius=0.1]; \draw[fill] (2,2.82) circle [radius=0.1]; \draw[fill] (8,0) circle [radius=0.1]; \draw[fill] (6,2.82) circle [radius=0.1]; \draw[fill] (10,2.82) circle [radius=0.1] node [above] {$v_5$};
\draw (0,0) node[below] {$v_0$} -- (4,0) node [below] {$v_2$} -- (2,2.82) node[above] {$v_1$} -- (0,0) node[below] {$v_0$}; 
\draw  (4,0) node [below] {$v_2$} -- (8,0) node [below] {$v_4$}; \draw  (4,0) node [below] {$v_2$} -- (6,2.82) node [above] {$v_3$};  \draw  (8,0) node [below] {$v_4$} -- (6,2.82) node [above] {$v_3$};
\end{tikzpicture}
\end{center}
\caption{An example of a political structure}
\label{F:politstr1}
\end{figure}

\begin{figure}
\begin{center}
\begin{tikzpicture}[scale=0.9] 
\begin{scope}
\fill[blue!10] (0,0) -- (4,0) -- (5.5, 1.41); \fill[blue!20] (0,0) -- (4,0) -- (2, 2.82); \fill[blue!15] (4,0) -- (5.5,1.41) -- (2, 2.82);
\end{scope}
\draw[fill] (0,0) circle [radius=0.1] node [below] {$v_2$}; \draw[fill] (4,0) circle [radius=0.1] node [below] {$v_3$}; \draw[fill] (2,2.82) circle [radius=0.1] node [left] {$v_1$};
\draw (0,0) -- (4,0) -- (2,2.82) -- (0,0); \draw[fill] (5.5,1.41) circle [radius=0.1] node [right] {$v_4$}; \draw (4,0)--(5.5,1.41); \draw (5.5,1.41)--(2,2.82); \draw[dashed] (0,0)--(5.5,1.41); \draw[fill] (4,5.64) circle [radius=0.1] node [above] {$v_0$}; \draw (4,5.64) --(2,2.82);
\end{tikzpicture}
\end{center}
\caption{Another example of a political structure}
\label{F:politstr2}
\end{figure}
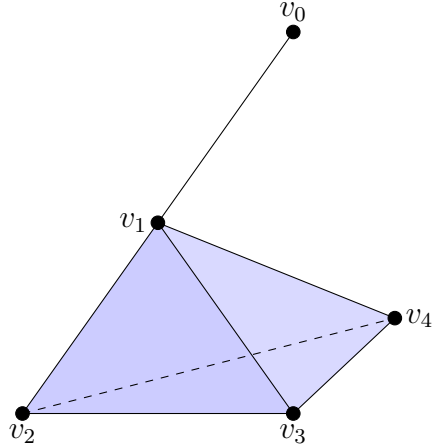

The dimension of $P$ is defined as the maximal dimension of its simplices. For example, the dimension of the structure in Example~\ref{E:structure1} is $2$, while the dimension of the structure in Example~\ref{E:structure2} is $3$.

Given a $d$-dimensional political structure $P$ with $k+1$ agents $\{v_0,v_1,\ldots,v_k\}$, let the $(d+2)$-tuple $f(P)=(f_{-1},f_0,f_1,\ldots,f_d)$ denote the $f$-vector of $P$, i.e., for $i\geq 0$, $f_i$ is the number of $i$-dimensional faces of the simplex, and $f_{-1}=1$. Obviously, $f_0=k+1$.

For a structure $P$, it is defined (\cite[Definition 3.8]{MV21}) a numerical characteristic $0\leq \stab(P)\leq 1$, and for each agent $v_i$ in $P$, it is defined a numerical characteristic $0\leq \via(v_i) \leq 1$.

\begin{defin}
Given a $d$-dimensional political structure $P$ with the set of agents $V=\{v_0,v_1,\ldots,v_k\}$, where $k\geq 1$,
\begin{itemize}
\item the \textit{stability} of $P$ is defined to be
$$\stab(P)=\frac{f_1+\cdots+f_d}{2^{k+1}-k-2},$$ and
\item the \textit{viability} of $v_i$ is defined to be $$\via(v_i)=\frac{1}{2^k-1}\left(\vert \st(v_i)\vert -1\right),$$
where $\st(v_i)$ is the \textit{star} of $v_i$, i.e. the collection of all simplices of $P$ containing $v_i$, and $\vert \cdot \vert$ denotes the cardinality.
\end{itemize}
\end{defin}

For $k=0$, we define by convention $\stab(P)=0$ and $\via(v_0)=0$. In the sequel, we assume that $P$ consists of at least two agents.

If a political structure $P$ consists only of agents (vertices), with no $i$-dimensional simplices (i.e., no compatible subsets) for $i > 0$, then $\stab(P) = 0$. The stability of $P$ is equal to $1$ if $P$ is a simplex -- that is, if all agents form a single coalition.

We note that the intuitive concept of stability of a political structure cannot be fully captured by a single numerical index; however, it is natural in this context to adopt the notion introduced in the works upon which we build. We have addressed this issue in several details in the final section.

\begin{example}
The stability of the structure in Example \ref{E:structure1} is $7/57$, while the stability of the structure in Example \ref{E:structure2} is $6/13$.
\end{example}

\begin{example}
The viabilities of the agents in Example \ref{E:structure1} are: $\via(v_0)=\via(v_1)=3/31, \via(v_2)=5/31, \via(v_3)=\via(v_4)=2/31, \via(v_5)=0$. The viabilities of the agents in Example \ref{E:structure2} are: $\via(v_0)=1/15, \via(v_1)=8/15, \via(v_2)=\via(v_3)=\via(v_4)=7/15$. 
\end{example}

We provide the following example to illustrate and motivate several concepts discussed later in the text. We will refer to it several times.

\begin{example}\label{E:important}
Let $P_{a,b;s}$ be a simplicial complex modelling a political structure, obtained by gluing an $(a-1)$-dimensional simplex $\Delta^{a-1}$ and an $(b-1)$-dimensional simplex $\Delta^{b-1}$ along a common $(s-1)$-dimensional simplex $\Delta^{s-1}$, where $1\leq s < a\leq b$. In short,
$$P_{a,b;s}=\Delta^{a-1}\bigcup_{\Delta^{s-1}}\Delta^{b-1}.$$

Figure \ref{F:politstr-imp} depicts $P_{3,4;2}$. A direct computation yields
$$\stab(P_{a,b;s})=\frac{2^a-a+2^b-b-2^s+s-1}{2^{a+b-s}-a-b+s-1}.$$
Although $P_{a,b;s}$ has a large number of simplices when $a$ or $b$ is large, its stability can be computed explicitly. Moreover, $\stab(P_{a,b;s})$ is an increasing function of $s$ for fixed $a$ and $b$. 

The viability of a vertex in $P_{a,b;s}$ takes just two (for $a=b$) or three (for $a\neq b$) values, namely
$$\frac{2^{a-1}-1}{2^{a+b-s-1}-1}, \frac{2^{b-1}-1}{2^{a+b-s-1}-1}, \frac{2^{a-1}+2^{b-1}-2^{s-1}-1}{2^{a+b-s-1}-1}.$$
\end{example}

\begin{figure}
\begin{center}
\begin{tikzpicture}[scale=0.9] 
\begin{scope}
\fill[blue!10] (0,0) -- (4,0) -- (5.5, 1.41); \fill[blue!20] (0,0) -- (4,0) -- (2, 2.82); \fill[blue!15] (4,0) -- (5.5,1.41) -- (2, 2.82); \fill[red!15] (4.97,5.14)--(2,2.82)--(5.5,1.41);
\end{scope}
\draw[fill] (0,0) circle [radius=0.1]; \draw[fill] (4,0) circle [radius=0.1]; \draw[fill] (2,2.82) circle [radius=0.1];
\draw (0,0) -- (4,0) -- (2,2.82) -- (0,0); \draw[fill] (5.5,1.41) circle [radius=0.1]; \draw (4,0)--(5.5,1.41); \draw[green, line width=3pt] (5.5,1.41)--(2,2.82); \draw[dashed] (0,0)--(5.5,1.41); \draw[fill] (4.97,5.14) circle [radius=0.1]; \draw (4.97,5.14) --(2,2.82); \draw (4.97,5.14) --(5.5,1.41); \draw (4.97,5.14)--(2,2.82)--(5.5,1.41);
\end{tikzpicture}
\end{center}
\caption{The political structure $P_{3,4;2}$}
\label{F:politstr-imp}
\end{figure}

For the $f$-vector $(f_{-1},f_0,f_1,\ldots,f_d)$ of a political structure $P$ with the set of vertices $V(P)=\{v_0,v_1,\ldots,v_k\}$, the $f$-polynomial of $P$ is defined to be
$$f_P(x)=f_{-1}+f_{0}x+f_1 x^2 +\cdots +f_d x^{d+1}.$$
For our purposes here we consider the shifted $f$-polynomial of $P$ to be
$$F_P(x)=f_1 x+f_2 x^2 +\cdots +f_d x^d.$$

Then $$\stab(P)=\frac{F_P(1)}{2^{k+1}-k-2},$$
and $$\sum_{v\in V(P)} \via(v)=\frac{F_P'(1)+F_P(1)}{2^k-1}.$$

The relation for stability is obvious. The relation for the sum of viabilities follows from the following observation: when counting  $\via(v)$, we count all simplices that contain $v$, excluding the vertex $\{v\}$ itself. Thus, every $(i-1)$-dimensional simplex for $i>1$ will be counted $i$ times in the sum $\sum_{v\in V(P)} \via(v)$. Therefore, the numerator in the sum of viabilities equals
$$2f_1+3f_2+4f_3+\cdots + (d+1)f_d=F_P'(1)+F_P(1).$$

\begin{example}
The shifted $f$-polynomial of the structure in Example \ref{E:structure1} is $F_P(x)=6x+x^2$.

The shifted $f$-polynomial of the structure in Example \ref{E:structure2} is $F_P(x)=7x+4x^2+x^3$.

The shifted $f$-polynomial of the structure $P_{a,b;s}$ from Example \ref{E:important} is
$$F_{P_{a,b;s}}(x)=\sum_{i=2}^{b}\left(\binom{a}{i} + \binom{b}{i} - \binom{s}{i}\right)x^{i-1},$$
where, by convention, $\binom{n}{k}=0$ for $k>n$. 
\end{example}

\begin{defin}
A \textit{mediator} in a political structure $P$ is an agent which is compatible with every viable configuration in $P$; in other words, an agent which is compatible with all other agents in $P$ and willing to join any potential coalition.
\end{defin}

Introducing an outside mediator $v$ into a political structure corresponds to taking the \textit{cone} $CP=P*\{v\}$ on $P$, where $*$ denotes the \textit{join} of $P$ and $\{v\}$. See \cite[A3]{MV21} for these and other operations on simplicial complexes.

This represents the situation in which a mediator is introduced to the entire structure, being compatible with all existing agents.

The following result is taken from \cite[Proposition 3.14]{MV21}. For completeness -- and because mediation is one of the main topics discussed here -- we provide a slightly different proof.

\begin{prop}\label{P:mediation1}
Introducing a mediator into a political structure $P$ with $k+1$ vertices increases its stability, i.e. $$\stab(P)\leq \stab(CP),$$ with equality holding if and only if $P$ is a $k$-simplex.
\end{prop}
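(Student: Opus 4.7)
The plan is to reduce the inequality to a straightforward combinatorial bound on $F_P(1)$ by using the cone formula for $f$-vectors.

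First, I would record how the $f$-vector behaves under coning. Since every simplex of $CP=P*\{v\}$ is either a simplex of $P$, the vertex $\{v\}$ itself, or $\sigma\cup\{v\}$ for some simplex $\sigma\in P$, one gets the standard identity $f_i(CP)=f_i(P)+f_{i-1}(P)$ for all $i\geq 0$, with the convention $f_{-1}(P)=1$. Summing this from $i=1$ up to $d+1$ (noting $f_{d+1}(P)=0$) expresses $F_{CP}(1)$ in terms of $F_P(1)$ and $f_0(P)=k+1$, yielding
$$F_{CP}(1)=2F_P(1)+(k+1).$$

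Next, I would substitute into the stability formulas. Since $CP$ has $k+2$ vertices, the claimed inequality $\stab(P)\leq \stab(CP)$ becomes
$$\frac{F_P(1)}{2^{k+1}-k-2}\;\leq\;\frac{2F_P(1)+(k+1)}{2^{k+2}-k-3}.$$
Cross-multiplying (the denominators are positive for $k\geq 1$, with the boundary case $k=0$ handled separately as a $0$-simplex) and simplifying, the inequality collapses to
$$(k+1)\bigl(\,2^{k+1}-k-2-F_P(1)\,\bigr)\;\geq\;0.$$

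Finally, I would interpret the parenthesized quantity combinatorially. The value $F_P(1)=f_1+f_2+\cdots+f_d$ counts the simplices of $P$ of dimension at least $1$, i.e.\ the subsets of the vertex set of size at least $2$. Among the $2^{k+1}$ subsets of a $(k+1)$-element set, exactly $k+1$ have size $1$ and one is empty, so there are $2^{k+1}-k-2$ subsets of size $\geq 2$. Hence $F_P(1)\leq 2^{k+1}-k-2$, which gives the desired inequality. Equality holds precisely when every $\geq 2$-element subset of the vertex set is a simplex, i.e.\ when $P$ is the full $k$-simplex.

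The main obstacle is really just the bookkeeping in step two; the combinatorial interpretation in step three is the conceptual heart, but it is immediate once the algebraic reduction is performed cleanly. No topology or deeper structure is needed beyond the cone formula for $f$-vectors and the definition of stability already recalled in the paper.
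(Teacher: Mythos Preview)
Your proof is correct and follows essentially the same route as the paper's: both compute that $CP$ has $2F_P(1)+(k+1)$ positive-dimensional simplices, reduce the stability inequality to $F_P(1)\leq 2^{k+1}-k-2$, and then observe that this is just the trivial bound on the number of $\geq 2$-element subsets of a $(k+1)$-set, with equality exactly for the full simplex. Your use of the cone identity $f_i(CP)=f_i(P)+f_{i-1}(P)$ to derive the simplex count is a slightly more systematic presentation of the same step the paper asserts directly.
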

\begin{proof}
If the number of positive-dimensional simplices of $P$ is $f_1+\cdots+f_d$, then the number of positive-dimensional simplices of $CP$ is $2(f_1+\ldots+f_d)+k+1$. The inequality $\stab(P)\leq \stab(CP)$ is equivalent to 
$$\frac{f_1+\cdots+ f_d}{2^{k+1}-k-2}\leq \frac{2(f_1+\cdots +f_d)+k+1}{2^{k+2}-k-3},$$ which readily reduces to $$f_1+\cdots +f_d \leq 2^{k+1}-k-2.$$ This holds because the maximal number of positive-dimensional simplices in a complex on the vertex set $V=\{v_0, v_1, \ldots, v_k\}$ is $2^{k+1}-k-2$, which counts all non-empty, non-singleton subsets of $V$. This maximum is attained precisely when $P$ is a $k$-simplex.
\end{proof}
\end{section}

\begin{section}{Effects of introducing a mediator into the substructure of a political structure}\label{S:Mediator}
The authors in \cite{MV21} raised a question about what happens when a mediator is introduced into a substructure of a political structure $P$, which corresponds to taking a cone on a subcomplex of the simplicial complex $P$. Politically, this describes cases where some agents are unable to reach agreement on a particular issue, requiring a form of targeted mediation -- where a mediator is introduced specifically between certain agents, with the aim of initiating dialogue within that subgroup. 

For a subcomplex $N\subset P$, let $P_N'$ denote the simplicial complex $P_N'=P\setminus N \coprod CN$ obtained by adding a new vertex $v$ and a cone $CN=N*\{v\}$ on $N$. The authors in \cite[Proposition 3.15]{MV21} provide the following result, which we reproduce here in a slightly reformulated version:

\begin{prop}\label{P:stab}
Suppose $P$ is a political structure with $k+1$ vertices and $N$ is a subcomplex of $P$. Let $G$ denote the number of simplices of $P\setminus N$, i.e. the total number of simplices that are left unaffected. If $G\geq k+1$, then $\stab(P)>\stab(P_N')$.
\end{prop}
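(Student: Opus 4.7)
The plan is to reduce the stability comparison to a single numerical inequality by carefully counting the faces of $P_N'$, and then to recognize the hypothesis $G\geq k+1$ as precisely the statement that this inequality holds.

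First I would enumerate the positive-dimensional simplices of $P_N'=(P\setminus N)\cup CN$. Let $S_P$ denote the number of positive-dimensional simplices of $P$ and let $|N|$ denote the total number of nonempty simplices of $N$. Every positive-dimensional simplex of $P_N'$ belongs to exactly one of three disjoint classes: a positive-dimensional simplex of $P$ that does not lie in $N$; a positive-dimensional simplex of $N$ itself (which survives inside $CN$); or a cone simplex $\sigma\cup\{v\}$ with $\sigma$ a nonempty simplex of $N$. Summing over the three classes gives $(S_P-S_N)+S_N+|N|=S_P+|N|$, so $P_N'$ has $k+2$ vertices and exactly $S_P+|N|$ positive-dimensional simplices.

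Next I would substitute into the stability formula. Since $P_N'$ has one more vertex than $P$, the inequality $\stab(P)>\stab(P_N')$ becomes
$$\frac{S_P}{2^{k+1}-k-2}>\frac{S_P+|N|}{2^{k+2}-k-3},$$
which, after clearing denominators and cancelling like terms, simplifies to
$$S_P\,(2^{k+1}-1)>|N|\,(2^{k+1}-k-2).$$

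Finally, I would translate the hypothesis. The total number of nonempty simplices of $P$ is $(k+1)+S_P$, so $G=(k+1)+S_P-|N|$, and the assumption $G\geq k+1$ is literally $|N|\leq S_P$. Combined with the elementary estimate $2^{k+1}-1>2^{k+1}-k-2$ (strict for every $k\geq 0$), this forces the required strict inequality, provided $S_P>0$; if $N$ is assumed nonempty then $|N|\geq 1$ makes $S_P\geq 1$ automatic. The main obstacle is really just the combinatorial bookkeeping---making sure the three families of simplices above are counted exactly once each, and correctly tracking how the denominator jumps from $2^{k+1}-k-2$ to $2^{k+2}-k-3$ when the mediator $v$ is adjoined. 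Once that ledger is straight, the rest is elementary algebra together with a one-line application of $G\geq k+1$.
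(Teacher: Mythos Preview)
Your argument is correct and follows essentially the same route as the paper: count the simplices of $P_N'$, plug into the stability formula, and reduce the comparison to a single linear inequality that the hypothesis $G\geq k+1$ forces. The paper phrases the count in terms of all nonempty simplices (writing $S'=2F+G+1$) and simplifies to $(k+1)F>(k+1-G)(2^{k+1}-1)$, then notes the right side is nonpositive when $G\geq k+1$; you count positive-dimensional simplices directly and land on the equivalent inequality $S_P(2^{k+1}-1)>|N|(2^{k+1}-k-2)$, finishing with $|N|\leq S_P$ and the factor comparison. Your treatment of the degenerate case $|N|=S_P=0$ (via the assumption that $N$ is nonempty) is actually a shade more careful than the paper's.
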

\begin{proof}
Let $F$ be the number of simplices of $N$. Then the total number $S$ of simplices of $P$ is $S=F+G$, and the total number of simplices of $P_N'$ is $S'=2F+G+1$. After some straightforward calculations, the inequality $$\frac{S-(k+1)}{2^{k+1}-(k+2)}=\stab(P)>\stab(P_N')=\frac{S'-(k+2)}{2^{k+2}-(k+3)}$$ reduces to $$(k+1)F>(k+1-G)(2^{k+1}-1),$$ which is true for $G\geq k+1$.
\end{proof}

We made a simple observation that identifies a condition under which the stability of the structure increases or decreases, not restricted to the case $G\geq k+1$, thereby providing a more complete answer to the aforementioned question from \cite{MV21}. We state it as an immediate corollary to the previous proposition.

\begin{cor}\label{C:stab}
The inequality $$\stab(P)>\stab(P_N')$$ holds if and only if $$G>\frac{(k+1)(2^{k+1}-S-1)}{2^{k+1}-k-2}.$$
\end{cor}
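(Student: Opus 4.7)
The plan is to simply pick up the chain of equivalent inequalities from the proof of Proposition \ref{P:stab} and continue the algebra, this time solving for $G$ rather than imposing a sufficient condition on it. The key observation is that the reductions in that proof are biconditional: each step is an equivalence, not just an implication, so the inequality $\stab(P) > \stab(P_N')$ is equivalent to
$$(k+1)F > (k+1-G)(2^{k+1}-1),$$
with no restriction on $G$ needed.

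First I would substitute $F = S - G$, which holds by definition since $S = F + G$. This replaces $F$ (which depends on $N$) by the total count $S$ and the unaffected count $G$, matching the variables appearing in the statement of the corollary. After expanding, the inequality becomes
$$(k+1)S - (k+1)G > (k+1)(2^{k+1}-1) - G(2^{k+1}-1).$$
Collecting the $G$-terms on the left gives
$$G\bigl((2^{k+1}-1) - (k+1)\bigr) > (k+1)(2^{k+1}-1) - (k+1)S,$$
i.e.\ $G\,(2^{k+1}-k-2) > (k+1)(2^{k+1}-S-1)$. Since $k\geq 0$ we have $2^{k+1}-k-2 \geq 0$, and in fact it is strictly positive for all $k$ appearing in a non-trivial political structure, so dividing both sides preserves the direction of the inequality and yields
$$G > \frac{(k+1)(2^{k+1}-S-1)}{2^{k+1}-k-2},$$
which is the claimed equivalent condition.

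There is essentially no obstacle here; the corollary is a routine rearrangement of the equivalence already established in the proof of Proposition \ref{P:stab}. The only point that deserves a brief comment is the sign of $2^{k+1}-k-2$, which equals the denominator of $\stab(P)$ and hence must be positive for the stability of a structure on $k+1$ vertices to be defined (it vanishes only in the degenerate case $k=0$, which is excluded because one needs at least two agents to discuss mediation into a non-trivial subcomplex). Thus the corollary follows at once, and recovers Proposition \ref{P:stab} as the special case in which $G\geq k+1$ makes the right-hand side at most $k+1$.
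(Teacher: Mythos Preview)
Your proof is correct and follows essentially the same route as the paper: substitute $F=S-G$ into the biconditional inequality $(k+1)F>(k+1-G)(2^{k+1}-1)$ obtained in the proof of Proposition~\ref{P:stab}, and rearrange to isolate $G$. The paper's argument is terser (it omits the intermediate expansion and the remark on the sign of $2^{k+1}-k-2$), but the substance is identical.
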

\begin{proof}
After substituting $F=S-G$, the inequality $$(k+1)F>(k+1-G)(2^{k+1}-1)$$ is equivalent to $$G>\frac{(k+1)(2^{k+1}-S-1)}{2^{k+1}-k-2}.$$
\end{proof}


Therefore, $G$ is a characteristic indicating how benevolent (acceptable) a mediator must be in order to increase stability. Like the main notion $\stab(P)$, it depends only on the total number of vertices and the number of simplices of $P$. In Remark \ref{R:G}, we give a consequence and an interpretation of this result, which actually provide a complete answer to the question from \cite{MV21} that we consider.

The condition stated in Corollary \ref{C:stab} subsumes the one in Proposition \ref{P:stab}, since $$k+1\geq \frac{(k+1)(2^{k+1}-S-1)}{2^{k+1}-k-2}$$ is equivalent to the obvious inequality $$S\geq k+1.$$

\begin{example}\label{E:structures1}\hfill
\begin{itemize}
\item[a)] If $P=\{\{v_0,v_1\},\{v_0\},\{v_1\},\{v_2\},\{v_3\},\{v_4\}\}$ is a political structure with $k+1=5$ agents and $S=6$ simplices, then taking the cone on the subcomplex $N=\{\{v_2\},\{v_3\},\{v_4\}\}$ leaves $G=3$ simplices unaffected. In this case we have $3=G< \frac{(k+1)(2^{k+1}-S-1)}{2^{k+1}-k-2}=\frac{5\cdot 25}{26}$, so the stability increases upon introducing a mediator. Indeed, by direct computation we obtain $1/26=\stab(P)<\stab(P_N')=4/57$. The same is true for $G=4$.
\item[b)]  If $P=\{\{v_0,v_1,v_2\},\{v_0,v_1\},\{v_1,v_2\},\{v_0,v_2\},\{v_1,v_3\},\{v_1,v_4\},\{v_0\},\{v_1\},\{v_2\},\{v_3\},\\\{v_4\}\}$ is a political structure with $k+1=5$ agents and $S=11$ simplices, then taking the cone on the subcomplex $N=\{\{v_0,v_1,v_2\},\{v_0,v_1\},\{v_1,v_2\},\{v_0,v_2\},\{v_0\},\{v_1\},\{v_2\}\}$ leaves $G=4$ simplices unaffected. In this case we have $4=G> \frac{(k+1)(2^{k+1}-S-1)}{2^{k+1}-k-2}=\frac{5\cdot 20}{26}$, so the stability decreases upon introducing a mediator. Indeed, by direct computation we obtain $3/13=\stab(P)>\stab(P_N')=13/57$.
\item[c)] The case in which stability remains unchanged upon introducing a mediator into a proper substructure, that is, when
$$G = \frac{(k+1)(2^{k+1} - S - 1)}{2^{k+1} - k - 2},$$
is, in fact, impossible. Since $S \geq k+1$, it follows that
$2^{k+1} - S - 1 \leq 2^{k+1} - k - 2$,
with equality only when $S = k+1$. Therefore, the expression above can be an integer only if $S = k+1$, in which case $G = k+1$. But this would imply that all simplices are unaffected - that is, no mediator has been introduced at all.
\end{itemize}
\end{example}

For a fixed number of vertices (agents) $k+1$, consider the function $$g(S)=\frac{(k+1)(2^{k+1} - S - 1)}{2^{k+1} - k - 2},$$ where $S\in \{k+1,k+2,\ldots,2^{k+1}-1\}$.
The function $g(S)$ is strictly decreasing. Consequently, as $S$ increases, the number of unaffected simplices $G<g(S)$ for which the introduction of a mediator leads to an increase in system stability decreases. In other words, the more ``developed" the structure -- i.e., the greater the number of simplices or pre-existing compatibilities among agents -- the larger the subcomplex (in terms of cardinality) into which the mediator must be introduced in order to achieve an increase in overall structure stability.

In the special case when $S=2^{k+1}-1$, it follows that $g(S)=0$. Here, a non-strict increase in stability would require $G\leq 0$, that is, $G=0$. Therefore, for the full simplex $\Delta^k$, introducing a mediator leads to a decrease in stability unless the mediator is introduced into the entire structure without excluding any part. Only under this condition does the system maintain its maximum stability value of $1$, both before and after the intervention.

\begin{rem}\label{R:G}
Given a political structure $P$ with $k+1$ vertices, $$\stab(P)\in \left\{0,\frac{1}{2^{k+1}-k-2},\frac{2}{2^{k+1}-k-2},\ldots,1-\frac{1}{2^{k+1}-k-2},1\right\}\subset [0,1].$$ These points are equally spaced within the unit interval $[0,1]$. Let $G$ denote the number of unaffected simplices from Proposition \ref{P:stab} and Corollary \ref{C:stab}. The case $G\geq k+1$ is clear. Let $G\in \{0,1,\ldots,k+1\}$. Then the points $1-\frac{G}{k+1}$ partition the interval $[0,1]$ into $k+1$ congruent subintervals. It is easy to see that the condition $$G>\frac{(k+1)(2^{k+1}-S-1)}{2^{k+1}-k-2}$$ from Corollary \ref{C:stab} is equivalent to $$1-\frac{G}{k+1}<\stab(P).$$ In other words, the stability increases after introducing a mediator if and only if $$\stab(P) < 1-\frac{G}{k+1},$$ that is, if and only if the point representing the stability of $P$ lies to the left of the point representing $1-\frac{G}{k+1}$. The right end $G=k+1$ corresponds to the  result from Proposition \ref{P:stab}, while the case $G=0$ is trivial. We have already shown in Example \ref{E:structures1}-c) that the points $\stab(P)$ and $1-\frac{G}{k+1}$ never coincide.

We also consider the following interpretation of the previous results to be useful. Our goal here is to introduce a mediator in such a way that the stability of the political structure increases; in other words, we introduce a cone over a subcomplex so that the stability of the resulting complex is greater than that of the initial complex. We seek the smallest possible subcomplex. Only a small number of simplices may remain unaffected, and how many do so depends on the number of vertices and the stability of the initial complex. The lower the stability of the initial complex, the more simplices may remain unaffected.
\end{rem}

\end{section}

\begin{section}{Effects of introducing many mediators into a political structure}\label{S:ManyMediators}
As previously described, introducing a single mediator $v$ into a political structure $P$ yields the cone $CP=P*\{v\}$ on $P$. Introducing a mediator increases the stability of the structure, that is, $\stab(P)\leq \stab(CP)$. Introducing simultaneously two mediators $v$ and $w$ into $P$ yields the suspension $\Sigma P=P*\{v,w\}$, where $\{v,w\}$ is just the set of vertices $v$ and $w$. The authors in \cite{MV21} claim that $\stab(P)\leq  \stab(\Sigma P)$; however, this need not be true. Their argument rests on the fact that the suspension can be seen as the pushout of two copies of the cone $CP$ along $P$, together with the claim that a pushout increases stability. But the pushout does not in general increase stability. It does so only when no new vertices are introduced, i.e. when we work with $P_1 \coprod_S P_2$, where $S$ is a subcomplex of both $P_1$ and $P_2$, and $P_2$ introduces no new vertices ($V(P_2)\subseteq V(P_1)$). In the case of the suspension, of course, new vertices are necessarily added. In general, one often encounters $$\stab(P_1 \coprod_S P_2)<\stab(P_1).$$ As a simple example, consider a path $v_0v_1\ldots v_{n-1}$ on $n$ vertices $\{v_0,v_1,\ldots,v_{n-1}\}$. It is easy to see that attaching a new $1$-simplex $v_{n-1}v_n$ along the vertex $v_{n-1}$ decreases stability.



When -- following the definitions and results from \cite{MV21} -- we define the suspension as the result of introducing two mediators, we do not require these mediators to be mutually compatible. But we could also require them to be mutually compatible. This observation leads us to the following important distinction.

There are two ways to introduce $n$ mediators into a political structure:
\begin{itemize}
\item[1)] they may be introduced one by one, or – equivalently – we may require that all mediators are mutually compatible (this is the situation, for example, when a political party appears with $n$ representatives with whom everyone wishes to form a coalition),
\item[2)] they may all enter at once, simultaneously, without any mutual connection.
\end{itemize}

These two cases, in fact, represent the extreme ways of introducing $n$ mediators. In principle, one could mix the two approaches, introducing some mediators sequentially and others simultaneously; such a mixed case can then be obtained by successive application of these two extreme cases.


In the first case, introducing two mediators corresponds to the join $P*\Delta^1$, introducing three mediators corresponds to the join  $P*\Delta^2$, and so on. In general, introducing $n$ mutually compatible mediators into a political structure $P$ yields a new political structure $$P_n^1=P*\Delta^{n-1}.$$ 
In the second case, if $m_1,m_2,\ldots,m_n$ are the mediators being introduced, the resulting structure is the join $$P_n^2=P*\{m_1,m_2,\ldots,m_n\}.$$
In the case of introducing a single mediator, $P_n^1=P_n^2$ since $\Delta^0$ is a point.

\begin{example}
The structure $P_{a,b;s}$ from Example \ref{E:important} can be obtained by introducing $s$ mediators one by one into the disjoint union $$\Delta^{a-s-1}\coprod \Delta^{b-s-1}.$$
\end{example}

\begin{thm}
Let $P$ be a given political structure with $k+1$ agents, and let $P_n^1$ be the political structure obtained by introducing $n$ mediators into $P$ one by one. Then $$\lim_{n\to \infty}\stab(P_n^1)=\stab(P)+\frac{k+2}{2^{k+1}}(1-\stab(P)).$$
\end{thm}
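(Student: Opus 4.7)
The plan is to derive a closed-form expression for $\stab(P_n^1)$ and then pass to the limit. First I would count all simplices of $P_n^1 = P * \Delta^{n-1}$ via the join decomposition: every simplex of the join is uniquely of the form $\sigma \sqcup \tau$, where $\sigma$ is a simplex of $P$ (allowing $\sigma = \emptyset$) and $\tau$ is an arbitrary subset of the $n$ new mediator vertices. Writing $S := 1 + (k+1) + f_1 + \cdots + f_d = f_P(1)$ for the total count of simplices of $P$ including the empty simplex, this yields $S \cdot 2^n$ simplices of $P_n^1$ in total.

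Since $P_n^1$ has $k+1+n$ vertices, its number of positive-dimensional simplices is $S \cdot 2^n - (k+n+2)$, and hence by definition
$$\stab(P_n^1) = \frac{S \cdot 2^n - (k+n+2)}{2^{k+n+1} - (k+n+2)}.$$
Dividing numerator and denominator by $2^{k+n+1}$, the linear-in-$n$ correction terms become negligible, and the limit collapses to $S/2^{k+1}$.

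The last step is to verify that $S/2^{k+1}$ coincides with the right-hand side of the stated identity. Using $S = (f_1 + \cdots + f_d) + (k+2)$ together with $\stab(P) = (f_1+\cdots+f_d)/(2^{k+1}-k-2)$, a short rearrangement gives
$$\frac{S}{2^{k+1}} = \frac{k+2}{2^{k+1}} + \frac{2^{k+1}-k-2}{2^{k+1}}\,\stab(P) = \stab(P) + \frac{k+2}{2^{k+1}}\bigl(1-\stab(P)\bigr),$$
as required. The main content is the bijective simplex count in the first step; the remainder is routine arithmetic and a standard limit, so I do not anticipate a genuine obstacle.
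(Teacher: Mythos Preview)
Your proof is correct and follows essentially the same strategy as the paper: compute a closed form for $\stab(P_n^1)$, take the limit as $n\to\infty$, and algebraically rewrite the result. Your simplex count via the join bijection (including the empty face so that the total is simply $S\cdot 2^n$) is a bit more streamlined than the paper's three-case breakdown, but the two computations produce the identical formula $\stab(P_n^1)=\dfrac{(\widetilde{S}+k+2)2^n-(k+n+2)}{2^{k+n+1}-(k+n+2)}$ (with $S=\widetilde{S}+k+2$), and the remaining limit and algebra are the same.
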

\begin{proof}
If the number of vertices in the structure $P$ is $k+1$, then the number of vertices in $P_n^1$ is $k+1+n$.

Let $\widetilde{S}$ denote the total number of $(\geq 1)$-dimensional simplices in $P$, i.e. the total number of positive-dimensional simplices in $P$. Then $$\stab(P)=\frac{\widetilde{S}}{2^{k+1}-k-2}.$$

Each $(\geq 1)$-dimensional simplex of $P$ is also a $(\geq 1)$-dimensional simplex of $P_n^1$ , and by combining them with the simplices of $\Delta^{n-1}$, each of these generates new simplices in $P_n^1$. Taking all this into account, and based on the fact that the set of $n$ mediators has $2^n$ subsets, we conclude that $\widetilde{S}$ $(\geq 1)$-dimensional simplices in $P$ give rise to $\widetilde{S}\cdot 2^n$ $(\geq 1)$-dimensional simplices in $P_n^1$.

Furthermore, each of the $k+1$ vertices in $P$, when combined with all simplices in $\Delta^{n-1}$, of which there are $2^n-1$, contributes $$(k+1)(2^n-1)=(k+1)2^n-(k+1)$$ additional $(\geq 1)$-dimensional simplices in $P_n^1$.

Finally, the number of $(\geq 1)$-dimensional simplices in $\Delta^{n-1}$ is $2^n-n-1$.

Therefore, the total number of $(\geq 1)$-dimensional simplices in $P_n^1$ equals $$(\widetilde{S}+k+2)2^n-n-k-2$$ and hence $$\stab(P_n^1)=\frac{(\widetilde{S}+k+2)2^n-n-k-2}{2^{k+n+1}-k-n-2}.$$

By dividing both the numerator and the denominator of the previous expression by $2^n$, we simply obtain that $$\lim_{n\to \infty}\stab(P_n^1)=\frac{\widetilde{S}+k+2}{2^{k+1}},$$
which, after further transformations, becomes:
\begin{align*}
\lim_{n \to \infty} \stab(P_n^1) 
&= \frac{(\widetilde{S} + k + 2)(2^{k+1} - k - 2)}{2^{k+1}(2^{k+1} - k - 2)} \\
&= \frac{\widetilde{S} \cdot 2^{k+1} + (k+2)2^{k+1} - (k+2)^2 - \widetilde{S}(k+2)}{2^{k+1}(2^{k+1} - k - 2)} \\
&= \frac{\widetilde{S}}{2^{k+1} - k - 2} + \frac{k+2}{2^{k+1}} \left( \frac{2^{k+1} - k - 2 - \widetilde{S}}{2^{k+1} - k - 2} \right) \\
&= \frac{\widetilde{S}}{2^{k+1} - k - 2} + \frac{k+2}{2^{k+1}} \left( 1 - \frac{\widetilde{S}}{2^{k+1} - k - 2} \right) \\
&= \stab(P) + \frac{k+2}{2^{k+1}} \big(1 - \stab(P) \big).
\end{align*}
\end{proof}

The number $\frac{k+2}{2^{k+1}}$ is small for large $k$. 
Therefore, introducing many mediators does not significantly increase the stability of the structure, especially in the case of structures with a larger number of agents. In this context, one may pose the following question, which would require a more formal framework to answer: given that introducing mediators brings with it a cost while increasing stability yields a benefit, how many mediators should be introduced for their introduction to be justified?
\begin{thm}
Let $P$ be a political structure with $k+1$ agents and $\widetilde{S}$ positive-dimensional simplices, and let $P_n^2$ denote the political structure obtained by introducing $n$ mediators into $P$ simultaneously. Then
$$\stab(P_n^2)>\stab(P)$$ if and only if $$\widetilde{S}<\frac{n(k+1)(2^{k+1}-k-2)}{2^{k+1}(2^n-n-1)+n(k+1)}.$$
\end{thm}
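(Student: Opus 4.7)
The plan is to mimic the strategy used for $P_n^1$ in the previous theorem: first count the positive-dimensional simplices of $P_n^2$, then write the stability inequality and reduce it by algebra.

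First I would enumerate the positive-dimensional simplices of $P_n^2 = P * \{m_1, \ldots, m_n\}$. Since $\{m_1,\ldots,m_n\}$ is a discrete simplicial complex (no edges among mediators), a simplex of the join has the form $\sigma \cup \tau$ where $\sigma$ is a simplex of $P$ (possibly empty) and $\tau$ is either empty or a single $\{m_i\}$. Breaking the positive-dimensional simplices into three disjoint families gives:
\begin{itemize}
\item the $\widetilde{S}$ positive-dimensional simplices already in $P$;
\item the $(k+1)n$ edges $\{v,m_i\}$ for $v$ a vertex of $P$ and $i=1,\ldots,n$;
\item the $\widetilde{S} \cdot n$ simplices of the form $\sigma \cup \{m_i\}$ with $\sigma$ positive-dimensional in $P$.
\end{itemize}
Since $P_n^2$ has $k+1+n$ vertices, this yields
$$\stab(P_n^2) = \frac{\widetilde{S}(n+1) + (k+1)n}{2^{k+n+1} - (k+n+2)}.$$

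Next I would set up the inequality $\stab(P_n^2) > \stab(P)$ and cross-multiply (both denominators are positive) to obtain
$$\bigl(\widetilde{S}(n+1) + (k+1)n\bigr)\bigl(2^{k+1}-k-2\bigr) \;>\; \widetilde{S}\bigl(2^{k+n+1}-k-n-2\bigr).$$
Moving all $\widetilde{S}$-terms to the right, the inequality becomes
$$n(k+1)\bigl(2^{k+1}-k-2\bigr) \;>\; \widetilde{S}\,\bigl[\bigl(2^{k+n+1}-k-n-2\bigr) - (n+1)\bigl(2^{k+1}-k-2\bigr)\bigr].$$

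The main (and only real) obstacle is verifying that the bracketed coefficient of $\widetilde{S}$ simplifies to $2^{k+1}(2^n-n-1) + n(k+1)$. I would compute
$$2^{k+n+1} - (n+1)2^{k+1} = 2^{k+1}(2^n - n - 1)$$
directly, and then collect the remaining constants:
$$(k+2)(n+1) - (k+n+2) = kn + k + 2n + 2 - k - n - 2 = n(k+1).$$
Substituting these two identities confirms that the coefficient is $2^{k+1}(2^n-n-1) + n(k+1)$, and since this is strictly positive, the inequality is equivalent to
$$\widetilde{S} < \frac{n(k+1)(2^{k+1}-k-2)}{2^{k+1}(2^n-n-1) + n(k+1)},$$
as claimed. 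The bi-implication then follows by reading each step as an equivalence.
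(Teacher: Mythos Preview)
Your proof is correct and follows the same approach as the paper's own proof: count the positive-dimensional simplices of $P_n^2$ to get $(n+1)\widetilde{S}+n(k+1)$, form the stability quotient, and reduce the inequality algebraically. You have simply written out in full the ``elementary calculations'' that the paper leaves to the reader, including the check that the coefficient of $\widetilde{S}$ is strictly positive so that division preserves the direction of the inequality.
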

\begin{proof}
If $P$ has $\widetilde{S}$ $(\geq 1)$-dimensional simplices, then it is easy to conclude that the total number of $(\geq 1)$-dimensional simplices in $P_n^2$ is $$\widetilde{S}+n\widetilde{S}+n(k+1)=(n+1)\widetilde{S}+n(k+1).$$ The number of vertices in $P_n^2$ is $k+1+n$. Therefore, 
$$\stab(P_n^2)=\frac{(n+1)\widetilde{S}+n(k+1)}{2^{k+n+1}-n-k-2}.$$
After elementary calculations, we obtain that the inequality $\stab(P_n^2)>\stab(P)$ is equivalent to $$\widetilde{S}<\frac{n(k+1)(2^{k+1}-k-2)}{2^{k+1}(2^n-n-1)+n(k+1)}.$$
\end{proof}
Therefore, introducing $n$ mediators simultaneously will increase stability only if $P$ has a small number of positive-dimensional simplices.
\begin{cor}
Let $P$ be a political structure with $k+1$ agents and $\widetilde{S}$ positive-dimensional simplices, and let $\Sigma P$ denote the suspension of $P$. Then
$$\stab(\Sigma P)>\stab(P)$$ if and only if $$\widetilde{S}<\frac{2(k+1)(2^{k+1}-k-2)}{2^{k+1}+2(k+1)}.$$
\end{cor}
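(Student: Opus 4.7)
The plan is to observe that the corollary is the special case $n=2$ of the preceding theorem. Introducing two mediators simultaneously, without requiring them to be mutually compatible, is exactly the operation $P_2^2 = P * \{m_1, m_2\}$, which by definition is the suspension $\Sigma P$.

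First I would invoke the theorem with $n = 2$: the inequality $\stab(\Sigma P) > \stab(P)$ is equivalent to
$$\widetilde{S} < \frac{n(k+1)(2^{k+1}-k-2)}{2^{k+1}(2^n - n - 1) + n(k+1)}$$
evaluated at $n = 2$. Then I would simplify the denominator: $2^n - n - 1 = 4 - 2 - 1 = 1$, so the bound reduces to
$$\widetilde{S} < \frac{2(k+1)(2^{k+1}-k-2)}{2^{k+1} + 2(k+1)},$$
which is precisely the claimed bound. No further computation is needed.

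Since the argument is a one-line substitution into the theorem, there is no real obstacle; the only thing worth emphasizing is the identification $\Sigma P = P_2^2$, making sure the reader sees that the suspension corresponds to the simultaneous (not sequential) introduction of two mediators, as discussed in the paragraph preceding the theorem.
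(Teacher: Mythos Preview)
Your proposal is correct and matches the paper's own proof, which is just the single line ``This follows directly from $P_2^2=\Sigma P$.'' Your added simplification $2^n-n-1=1$ at $n=2$ makes the substitution explicit, but the approach is identical.
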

\begin{proof}
This follows directly from $P_2^2=\Sigma P$.
\end{proof}

Let us also consider the $n$-fold suspension $\Sigma^n P$ of a political structure $P$. In this construction, pairs of new mediators are introduced $n$ times: first, two mediators are added, yielding the suspension $\Sigma P$; then, two mediators are added to $\Sigma P$, yielding $\Sigma^2 P$, and so on, up to $\Sigma^n P$.

\begin{thm}
Let $P$ be a political structure with $k+1$ agents and $\Sigma^n P$ the $n$-fold suspension of $P$. Then 
$$\lim_{n \to \infty} \stab(\Sigma^n P)=0.$$
\end{thm}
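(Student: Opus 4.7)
Let $a_n$ denote the number of positive-dimensional simplices of $\Sigma^n P$ and let $v_n := k+1+2n$ denote its total number of vertices; in particular $a_0=\widetilde{S}$, $v_0=k+1$, and
$$\stab(\Sigma^n P) = \frac{a_n}{2^{v_n}-v_n-1}.$$
Since $\Sigma^n P = \Sigma(\Sigma^{n-1}P)$ is obtained from $\Sigma^{n-1}P$ by introducing two mediators simultaneously, the simplex-count computation in the proof of the preceding theorem (applied with the parameter ``$n$'' equal to $2$, vertex count $v_{n-1}$, and positive-dimensional simplex count $a_{n-1}$) gives the recursion
$$a_n = 3a_{n-1} + 2v_{n-1}.$$

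Next I would unfold the recursion to obtain the closed form
$$a_n = 3^n \widetilde{S} + 2\sum_{i=0}^{n-1} 3^{\,n-1-i}(k+1+2i),$$
and then use a crude bound on the sum to get $a_n \le 3^n\bigl(\widetilde{S}+k+1+2n\bigr)$. On the other hand $2^{v_n}=2^{k+1}\cdot 4^n$, so the denominator $2^{v_n}-v_n-1$ grows like $2^{k+1}\cdot 4^n$. Dividing numerator and denominator by $4^n$ gives
$$\stab(\Sigma^n P) \le \frac{(3/4)^n\bigl(\widetilde{S}+k+1+2n\bigr)}{2^{k+1}-(k+2n+2)/4^n},$$
and the right-hand side tends to $0$ as $n\to\infty$ because the exponential decay $(3/4)^n$ dominates the linear factor $2n$ in the numerator, while the denominator tends to $2^{k+1}>0$.

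The conceptual heart of the argument is that each iterated suspension multiplies the number of positive-dimensional simplices by essentially $3$ (with a lower-order correction $2v_{n-1}$ coming from the new edges joining the two added mediators to the previously existing vertices), whereas adding two vertices multiplies the maximal simplex count $2^{v_n}$ by $4$. There is no real obstacle here; the only delicate point is to verify that the additive correction $2v_{n-1}$ in the recursion, which grows linearly with $n$, does not disturb the $(3/4)^n$ decay. The explicit closed form above makes this transparent, since the contribution of the linear term enters the final bound only through the factor $(\widetilde{S}+k+1+2n)$, which is subexponential and therefore negligible against $(4/3)^n$.
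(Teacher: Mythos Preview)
Your proof is correct and follows essentially the same route as the paper: the paper derives the identical recursion $a_n = 3a_{n-1} + 2v_{n-1}$, unfolds it to the same closed form $3^n\widetilde{S} + 2\sum_{i=0}^{n-1}3^{\,n-1-i}(k+1+2i)$, and then divides numerator and denominator by $2^{2n}=4^n$ to conclude. The only cosmetic difference is that you insert an explicit crude bound $a_n \le 3^n(\widetilde{S}+k+1+2n)$ before passing to the limit, whereas the paper simply divides by $4^n$ and asserts the result; your extra line makes the domination of $(3/4)^n$ over the linear correction more visible but does not change the argument.
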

\begin{proof}
If $P$ has $k+1$ vertices and $\widetilde{S}$ $(\geq 1)$-dimensional simplices, then the number of vertices in $\Sigma P$ is $k+3$, and the number of $(\geq 1)$-dimensional simplices in $\Sigma P$ is $$\widetilde{S}+2\widetilde{S}+2(k+1)=3\widetilde{S}+2(k+1).$$ The number of vertices in $\Sigma^2 P$ is $k+5$, and the number of $(\geq 1)$-dimensional simplices in $\Sigma^2 P$ is $$3(3\widetilde{S}+2(k+1))+2(k+3)=3^2 \widetilde{S}+3\cdot 2(k+1)+2(k+3).$$ By induction, it follows that the number of vertices in $\Sigma^n P$ is $k+2n+1$, and the number of $(\geq 1)$-dimensional simplices in $\Sigma^n P$ is $$3^n \widetilde{S}+3^{n-1}\cdot 2(k+1)+3^{n-2}\cdot 2(k+3)+\cdots + 3\cdot 2(k+2n-3)+2(k+2n-1).$$
Therefore, $$\stab(\Sigma^n P)=\frac{3^n \widetilde{S}+3^{n-1}\cdot 2(k+1)+3^{n-2}\cdot 2(k+3)+\cdots + 3\cdot 2(k+2n-3)+2(k+2n-1)}{2^{2n+k+1}-k-2n-2}.$$
By dividing both the numerator and the denominator of the previous equation by $2^{2n}$, we obtain $$\lim_{n \to \infty} \stab(\Sigma^n P)=0.$$

\end{proof}

\end{section}

\begin{section}{Splitting and stability}\label{S:Splitting}
Let $P$ be a political structure consisting of two disjoint connected components, $P_1$ and $P_2$. We can consider the situation in which one agent from $P_1$ and one agent from $P_2$ choose to coordinate their actions, thereby effectively functioning as a single agent -- in other words, the agents merge. From a topological perspective, the disjoint union $P_1 \amalg P_2$ thereby becomes the wedge sum $P_1 \lor P_2$. As shown in \cite[Proposition 3.13]{MV21}, this transformation results in an increase in the stability, that is, $\stab(P_1 \amalg P_2)<\stab(P_1 \lor P_2)$.

In contrast to agent merging, one can also theoretically and practically consider the reverse process -- agent splitting -- which is frequently exemplified in real-world cases such as the division of a political party into two or more separate parties. We will investigate how the stability of the structure changes in this case; more precisely, we will analyze the relationship between the stability of the political structure before agent splitting and that of the newly formed structure after the splitting.

Assume that $v$ is an agent (a vertex) of $P$ and that $v$ splits into two agents $a$ and $b$ (new vertices). Each of these new agents forms new coalitions, which can be described as the \textit{links} $L_a \subset P$ and $L_b \subset P$ of the vertices $a$ and $b$, respectively. Therefore, the resulting political structure is $$P_v'=(P\setminus \{v\})\amalg \left(a*L_a\right) \amalg \left(b*L_b\right).$$ The question is to compare the stability of $P$ and $P_v'$.

\begin{prop}\label{P:split}
Let $P$ be a political structure with $k+1$ vertices and the total number $S$ of simplices.  Suppose a vertex $v$ splits into two vertices $a$ and $b$, resulting in the political structure $P_v'$. Let $F$ denote the number of coalitions containing $v$, that is, the number of simplices in which the vertex $v$ appears, not counting the vertex $v$ itself. Let $X$ denote the total number of new coalitions, i.e., new simplices containing $a$ or $b$, after the split of $v$, excluding the vertices $a$ and $b$ themselves. Then $$\stab(P)>\stab(P_v')$$ if and only if 
\begin{equation}\label{eq:XF}
X<\frac{(2^{k+1}-1)(S-k-1)}{2^{k+1}-k-2}+F.
\end{equation}
\end{prop}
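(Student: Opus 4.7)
The plan is direct bookkeeping of simplex counts, followed by algebraic simplification. First I would compute the total number of simplices of $P_v'$ in terms of $S$, $F$, and $X$. In $P$ there are $F+1$ simplices containing $v$ (the $F$ coalitions of size $\geq 2$ that include $v$, plus the vertex $\{v\}$ itself), so $P\setminus\{v\}$ contributes $S-F-1$ simplices to $P_v'$. The two new vertex simplices $\{a\}$ and $\{b\}$ contribute $2$ more, and by hypothesis the construction adjoins $X$ further simplices containing $a$ or $b$. Hence $P_v'$ has $k+2$ vertices and $S-F+X+1$ simplices in total, of which $S-F+X-k-1$ are of positive dimension.

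Substituting into the definition of stability yields
\[ \stab(P)=\frac{S-k-1}{2^{k+1}-k-2}, \qquad \stab(P_v')=\frac{S-F+X-k-1}{2^{k+2}-k-3}. \]
Since both denominators are positive, the inequality $\stab(P)>\stab(P_v')$ is equivalent to
\[ (S-k-1)(2^{k+2}-k-3) > (S-F+X-k-1)(2^{k+1}-k-2). \]

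The key algebraic observation that makes the simplification clean is
\[ 2^{k+2}-k-3 = 2(2^{k+1}-k-2)+(k+1), \]
which lets me peel off a common factor of $2^{k+1}-k-2$ on both sides. After doing so, the inequality rearranges to
\[ (S-k-1)\bigl[(2^{k+1}-k-2)+(k+1)\bigr] > (X-F)(2^{k+1}-k-2), \]
and the identity $(2^{k+1}-k-2)+(k+1)=2^{k+1}-1$ immediately delivers
\[ X-F < \frac{(2^{k+1}-1)(S-k-1)}{2^{k+1}-k-2}, \]
which is precisely \eqref{eq:XF}.

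The main obstacle is really only the careful accounting of simplex counts -- one must be consistent about whether the singletons $\{v\}$, $\{a\}$, $\{b\}$ are being included or excluded in $S$, $F$, and $X$ -- together with spotting the compact identity relating the old and new denominators $2^{k+1}-k-2$ and $2^{k+2}-k-3$. No further topological input is required beyond the definitions already introduced in \refS{Basics}.
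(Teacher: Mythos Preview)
Your proof is correct and follows essentially the same approach as the paper's: count the simplices of $P_v'$ to obtain $S+X-F+1$ total simplices on $k+2$ vertices, plug into the definition of stability, and reduce the resulting inequality algebraically. The paper merely says ``after elementary calculations'' where you spell out the helpful identity $2^{k+2}-k-3 = 2(2^{k+1}-k-2)+(k+1)$, but the argument is otherwise identical.
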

\begin{proof}
If $P$ has $k+1$ vertices and $S$ simplices, then the structure $P_v'$ has $k+2$ vertices and the total number $$S-1-F+2+X=S+X-F+1$$ of simplices. The stabilities of these two structures are then $$\stab(P)=\frac{S-(k+1)}{2^{k+1}-(k+2)}$$ and $$\stab(P_v')=\frac{S+X-F+1-(k+2)}{2^{k+2}-(k+3)}=\frac{S+X-F-(k+1)}{2^{k+2}-(k+3)}.$$
After elementary calculations, we obtain that the inequality $\stab(P)>\stab(P_v')$ is equivalent to the inequality $$X-F<\frac{(2^{k+1}-1)(S-k-1)}{2^{k+1}-k-2}.$$
\end{proof}
In other words, the stability after splitting increases, i.e., $\stab(P)<\stab(P_v')$, if and only if $$X>\frac{(2^{k+1}-1)(S-k-1)}{2^{k+1}-k-2}+F.$$

We see that $X$ can be regarded as a characteristic of the simplicial complex and the vertex being split.

\begin{rem}
Notice that $F = (2^k-1)\via(v)$. Hence, the inequality \eqref{eq:XF} can be written as $$X < (2^{k+1}-1)\stab(P)+(2^k-1)\via(v).$$ Therefore, $X$ is a characteristic described using the notions introduced in \cite{MV21}. In addition, in this case one could proceed similarly to the discussion in Remark \ref{R:G}.
\end{rem}

Proposition \ref{P:split} confirms the expected fact: the greater the total number of compatibilities in the political structure before the split of agent $v$ into agents $a$ and $b$, and the greater the number of compatibilities between $v$ and the other agents, the more new compatibilities $a$ and $b$ need to establish in order for the stability of the new political structure to exceed that of the original one. In short, stability increases if the number of new coalitions is sufficiently large, and decreases if this number is sufficiently small.

\begin{example}\label{E:split}
Consider the political structure from Example \ref{E:structures1}-b). Its stability is $3/13$. Suppose that the vertex $v_3$ splits into two vertices, 
$a$ and $b$, where $a$ is compatible with $v_0, v_1, v_2$ and with all of their possible mutual coalitions, and $b$ is compatible only with $v_4$. The resulting structure 
\begin{equation*}
\begin{aligned}
P_{v_3}' = \{ &\{v_0,v_1,v_2,a\}, \{v_0,v_1,v_2\},\{v_0,v_1,a\}, \{v_0,v_2,a\}, \{v_1,v_2,a\}, \{v_0,v_1\}, \{v_1,a\}, \{v_1,v_2\}, \\
             &\{v_0,a\}, \{a,v_2\}, \{v_0,v_2\}, \{v_1,v_4\}, \{b,v_4\}, \{v_0\}, \{v_1\}, \{v_2\}, \{a\}, \{b\}, \{v_4\} \}
\end{aligned}
\end{equation*}
 is depicted in Figure \ref{fig:split}(B).

It may seem at first glance that the new structure has greater stability than the original, given that the vertex $a$ contributes to the formation of four $2$-simplices and one $3$-simplex. However, this is not the case: $\stab(P_{v_3}')=13/57<\stab(P)$. In this case, $X=8$, $F=1$, $S=11$, $k+1=5$, and $$8=X<\frac{(2^{k+1}-1)(S-k-1)}{2^{k+1}-k-2}+F\approx 8.15.$$
\end{example}

\begin{figure}[h!]
\centering

\begin{subfigure}[t]{0.45\textwidth}
\centering
\begin{tikzpicture}[scale=0.8]
\fill[blue!20] (0,0) -- (4,0) -- (2,3.465) -- cycle;
\draw (0,0) -- (4,0) -- (2,3.465) -- (0,0);
\draw (4,0) -- (6,3.465);
\draw (4,0) -- (8,0);
\draw[fill] (0,0) circle [radius=0.1] node [left] {$v_0$};
\draw[fill] (4,0) circle [radius=0.1] node [below] {$v_1$};
\draw[fill] (2,3.465) circle [radius=0.1] node [above] {$v_2$};
\draw[fill] (6,3.465) circle [radius=0.1] node [right] {$v_3$};
\draw[fill] (8,0) circle [radius=0.1] node [right] {$v_4$};
\end{tikzpicture}
\caption{Political structure $P$}
\end{subfigure}
\hfill
\begin{subfigure}[t]{0.45\textwidth}
\centering
\begin{tikzpicture}[scale=0.8]

\coordinate (A) at (4.8,2.3);

\fill[blue!15, opacity=0.7] (0,0) -- (4,0) -- (2,3.465) -- cycle;     
\fill[blue!30, opacity=0.7] (A) -- (4,0) -- (2,3.465) -- cycle;       
\fill[blue!10, opacity=0.7] (0,0) -- (4,0) -- (A) -- cycle;           
\fill[blue!20, opacity=0.6] (0,0) -- (2,3.465) -- (A) -- cycle;       

\draw (0,0) -- (4,0) -- (2,3.465) -- (0,0);
\draw (A) -- (4,0);
\draw (A) -- (2,3.465);
\draw[dashed] (A) -- (0,0);

\draw (4,0) -- (8,0);
\draw (8,0) -- (6.5,3.465);

\draw[fill] (0,0) circle [radius=0.1] node [left] {$v_0$};
\draw[fill] (4,0) circle [radius=0.1] node [below] {$v_1$};
\draw[fill] (2,3.465) circle [radius=0.1] node [above] {$v_2$};
\draw[fill] (A) circle [radius=0.1] node [above right] {$a$};
\draw[fill] (8,0) circle [radius=0.1] node [right] {$v_4$};
\draw[fill] (6.5,3.465) circle [radius=0.1] node [right] {$b$};
\end{tikzpicture}
\caption{Political structure $P_{v_3}'$}
\end{subfigure}

\caption{Political structures from Example \ref{E:split}}
\label{fig:split}
\end{figure}
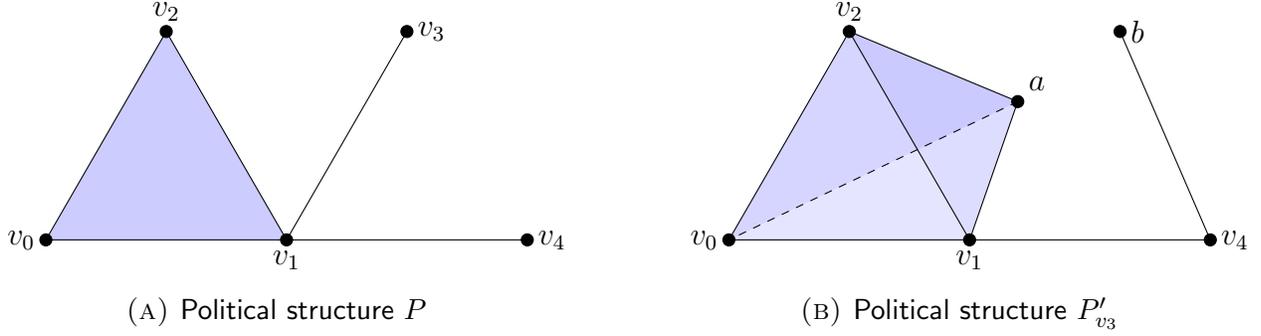

\begin{example}
The structure $P_{a,b;s}$ from Example \ref{E:important} can be obtained from $P_{a,b;s+1}$ by splitting one of the common agents in such a way that one of the new agents is compatible with all faces of $\Delta^{a-1}$, while the other is compatible with all faces of $\Delta^{b-1}$. Since the function $\stab(P_{a,b;s})$ is increasing with respect to $s$, we have
$$\stab(P_{a,b;s})<\stab(P_{a,b;s+1}).$$

\end{example}

\end{section}

\begin{section}{Shellable political structures and their stability}\label{S:Shelling}


The structure $P_{a,b;s}$ from Example \ref{E:important}, and in particular the fact that the maximal stability of $P_{b,b;s}$ is attained when $s=b-1$, is one of the motivations for introducing the concept of \textit{shellability} in our context.

In this section, we examine a special class of political structures modeled by simplicial complexes -- those that are shellable. Shellability is a combinatorial property that ensures a well-behaved structure of facets and yields non-negative, interpretable entries in the $h$-vector. By leveraging this property, we obtain clearer insight into the stability of such structures and how it evolves under controlled modifications. We also consider weighted models and, in the last section, graph-based interpretations within the same combinatorial framework.

We begin by expressing stability in terms of the $h$-vector, proceed to discuss the notion of shellability, and ultimately investigate the relevance of shellability in the context of stability and weighted political structures.

The stability of a political
structure $P$ depends solely on its $f$-vector $f(P)=(f_{-1},f_0,\ldots,f_d)$, more precisely, on
the number of faces of $P$ of dimension at least $1$. The
$h$-vector of a $d$-dimensional simplicial complex $P$ is a
$(d+2)$-tuple $h(P)=(h_0,h_1,\ldots,h_{d+1})$ defined by
\begin{equation}\label{E:h-iz-f}
h_s=\sum_{i=0}^{s}(-1)^{s-i}{d+1-i\choose d+1-s}f_{i-1}, \textrm{
for all }s=0,1,\ldots,d+1.
\end{equation}
In other words, $h(P)$ is a linear
transformation of $f(P)$.
 Note that for any $d$-dimensional complex $P$ we have $h_0=1$, $h_1=f_0-d-1$.
 Moreover, the last entry $h_{d+1}$ of the $h$-vector equals
 the reduced Euler characteristic of $P$:
 $$h_{d+1}=f_d-f_{d-1}+\cdots+(-1)^df_0+(-1)^{d+1}=\widetilde{\chi}(P).$$
 
 \begin{example}
 The $f$-vector of the structure in Example \ref{E:structure1} is $(1,6,6,1)$, and the corresponding $h$-vector is $(1,3,-3,0)$.
 
 The $f$-vector of the structure in Example \ref{E:structure2} is $(1,5,7,4,1)$, and the corresponding $h$-vector is $(1,1,-2,1,0)$.
 \end{example}

 The linear transformation (\ref{E:h-iz-f}) is invertible; indeed,
 the inverse is given by
\begin{equation}\label{E:f iz h}
f_{i-1}=\sum_{j=0}^i{d+1-j\choose  i -j} h_j\textrm{, for all
}i=0,1,\ldots,d+1.
\end{equation}
The stability of a political structure $P$ can thus be expressed
in terms of its $h$-vector.
\begin{prop}\label{P:stab-via-h}
Let $P$ be a $d$-dimensional political structure with $k+1$ agents, and let its $h$-vector be $(h_0,h_1,\ldots,h_{d+1})$. Then
\begin{equation}\label{E:Stabviah}
    \stab(P)=
\frac{\sum_{i=0}^{d+1} 2^{d+1-i}h_i-k-2}{2^{k+1}-k-2}.
\end{equation}
\end{prop}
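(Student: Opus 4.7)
The plan is to begin with the definition of stability as $\stab(P) = (f_1 + \cdots + f_d)/(2^{k+1} - k - 2)$ and re-express the numerator in terms of the $h$-vector by means of the inversion formula (\ref{E:f iz h}). Since $f_{-1} = 1$ and $f_0 = k+1$, the numerator can be rewritten as
\[
f_1 + f_2 + \cdots + f_d \;=\; \sum_{i=-1}^{d} f_i \;-\; (k+2),
\]
so everything reduces to computing $\sum_{i=-1}^{d} f_i = \sum_{i=0}^{d+1} f_{i-1}$ purely in terms of the $h_j$'s.

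The key step is to substitute (\ref{E:f iz h}) and interchange the order of summation:
\[
\sum_{i=0}^{d+1} f_{i-1} \;=\; \sum_{i=0}^{d+1} \sum_{j=0}^{i} \binom{d+1-j}{i-j} h_j \;=\; \sum_{j=0}^{d+1} h_j \sum_{i=j}^{d+1} \binom{d+1-j}{i-j}.
\]
Reindexing the inner sum by $\ell = i-j$ produces $\sum_{\ell=0}^{d+1-j} \binom{d+1-j}{\ell} = 2^{d+1-j}$, and therefore
\[
\sum_{i=-1}^{d} f_i \;=\; \sum_{j=0}^{d+1} 2^{d+1-j} h_j.
\]

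Subtracting $k+2$ and dividing by $2^{k+1}-k-2$ then yields (\ref{E:Stabviah}) directly. There is no real obstacle here; the only thing to watch is bookkeeping at the boundary indices ($f_{-1}=h_0=1$ and $f_0 = h_0(d+1) + h_1 = k+1$, so that $h_1 = k - d$, consistent with the formula). I would close the proof by noting that the same identity is obtained by observing that the full $f$-polynomial $\sum_{i\ge -1} f_i x^{i+1}$ and the $h$-polynomial are related by a substitution that specializes, at $x=1$, exactly to the claimed weighting $\sum_j 2^{d+1-j} h_j$.
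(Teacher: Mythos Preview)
Your proof is correct and follows essentially the same route as the paper: both start from the definition of $\stab(P)$, invoke the inversion formula \eqref{E:f iz h}, interchange the order of summation, and collapse the inner binomial sum to $2^{d+1-j}$. The only cosmetic difference is that you include $f_{-1}$ in the total sum and subtract $k+2$ at the end, whereas the paper sums from $i=1$, subtracts $f_0$, and then separately accounts for the $i=0$ term (yielding the $-h_0$); the underlying computation is identical.
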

\begin{proof}
Using \eqref{E:f iz h}, the facts $h_0=1$, $f_0=k+1$, and elementary operations, we obtain \begin{equation*}
\begin{aligned}
f_1 + \cdots + f_d 
&= \sum_{i=1}^{d+1} f_{i-1} - f_0 \\
&= \sum_{i=1}^{d+1} \sum_{j=0}^i \binom{d+1 - j}{i - j} h_j - f_0 \\
&= \sum_{i=0}^{d+1} 2^{d+1 - i} h_i - h_0 - f_0 \\
&= \sum_{i=0}^{d+1} 2^{d+1 - i} h_i - k - 2
\end{aligned}
\end{equation*}
 from which \eqref{E:Stabviah} follows.
\end{proof}



The combinatorial meaning of the entries of the $h$-vector, as given by the definition in \eqref{E:h-iz-f}, is not immediately evident and does not always exist, in particular for negative entries. However, for a well-behaved class of complexes known as \emph{shellable complexes}, the entries of $h$-vector are non-negative and admit a meaningful combinatorial interpretation. Also, some of the operations that appear here (cone, suspension, join) preserve shellability and the change of $h$-vector under these operations is easy to track. We briefly recall the definition of shellability; for further details, see \cite{BjornerH, BjornerShellable, BjornerWachs, Ziegler}.

We recall that a \textit{face} of a simplicial complex is any simplex in the complex. A maximal simplex in a simplicial complex -- one that is not a face of any other simplex in the complex -- is called a \textit{facet}. A $d$-dimensional simplicial complex $P$ is said to be
\textit{pure} if every simplex of $P$ of dimension less than $d$ is a
face of some $d$-dimensional simplex in $P$. A simplicial complex $P$ is
\textit{shellable} if $P$ is pure and there exists a linear
ordering (called a \textit{shelling order}) $F_1, F_2,\ldots, F_k$ of facets of $P$ such that for every $i <
j\leq k$ there exists some $l < j$ and a vertex $v$ of $F_j$ such
that
$$F_i\cap F_j \subseteq F_l\cap F_j =F_j\setminus\{v\}.$$


Informally speaking, a shellable $d$-dimensional simplicial complex $P$ can be constructed by successively gluing its facets in such a way that each new facet $F_j$ attaches to the union of the previously added facets $F_1\cup
F_2\cup\cdots\cup F_{j-1}$ along a pure $(d-1)$-dimensional subcomplex  (i.e., a union of facets of $F_j$).

As an example, the boundary of every simplicial polytope is shellable. From a topological perspective, a shellable simplicial complex of dimension $d$ is homotopy equivalent to a wedge of $d$-dimensional spheres or is contractible.

\begin{rem}
In summary, a shellable political structure $P$ of dimension $d$ satisfies the following two conditions:
\begin{itemize}
\item all maximal viable configurations of $P$ have the same number of agents, namely $d+1$;
\item for any two maximal viable configurations $F$ and $F'$ of $P$, there exists a path $F = F_1, F_2, \ldots, F_n = F'$ such that $|F_i \cap F_{i+1}| = d$ for each $i$.
\end{itemize}
This can be interpreted as follows. 
Any maximal viable configuration (coalition) can be obtained from any other through small changes, where one agent is replaced by another. At each step, the set remains a maximal viable configuration, allowing one to move from one coalition to another via small, incremental steps without drastic disruptions.
\end{rem}

  For a fixed shelling order $F_1, F_2,\ldots, F_k$ of
   $P$, we define the \textit{restriction}
   $R(F_j)$ of the facet $F_j$ by:
$$R(F_j) = \{v \textrm{ is a vertex of }F_j :
F_j \setminus \{v\}\subset F_i\textrm{ for some }1 \leq i < j\}.$$
In other words, if we built up $P$ according to the shelling
order, then $R(F_j)$ is a minimal new face at the $j$-th step. The
\textit{type} of the facet $F_j$ in the given shelling order is
the cardinality of $R(F_j)$, i.e., $\type(F_j) = |R(F_j) |$.

If a simplicial complex $P$ is shellable, then
$$h_k(P)=|\{F\textrm{ is a facet of }P: \type(F)=k\}|$$ is
an important combinatorial interpretation of $h(P)$. This
interpretation of the $h$-vector played a central role in the
proof of the Upper bound theorem and in the characterization of
$f$-vectors of simplicial polytopes; see \cite{Ziegler} for details.

\begin{example}
The $h$-vector of the simplicial complex $P_{b,b-1;b-2}$, which is not shellable, is $(1,1,-1,0,\ldots,0)$.

The simplicial complex $P_{b,b;b-1}$ is shellable. Its $h$-vector is $(1,1,0,\ldots,0)$.
\end{example}

If the political structure under consideration is shellable, we observe that formula \eqref{E:Stabviah} becomes particularly useful.
In that case, we can easily track how the stability
changes by adding or deleting some facets (keeping the
shellability). Adding (or removing) a new facet of type $i>1$
(there are no new vertices, i.e. agents), increases (or
decreases) the corresponding entry $h_i$ by one.
Consequently, the stability of $P$ increases (or decreases) by
$$\frac{2^{d+1-i}}{2^{k+1}-k-2}.$$

\begin{example}\label{E:shelling-decrease}
Consider what happens if we glue a facet that brings one new vertex, i.e., a facet of type $1$. Assume that $$P=F_1\cup F_2 \cup \cdots \cup F_j$$ and let $F_{j+1}$ be a $d$-dimensional simplex of type 1, where $P$ has $k+1$ agents. If $\widetilde{S}$ is the total number of positive-dimensional simplices in $P$, then $\stab(P)=\frac{\widetilde{S}}{2^{k+1}-k-2}$. If the dimension of $P$ is $d$, then for $$P'=F_1\cup F_2 \cup \cdots \cup F_j\cup F_{j+1}$$ we obtain 
$$\stab(P')=\frac{\widetilde{S}+2^d-1}{2^{k+2}-k-3}.$$
Then the inequality $$\stab(P')>\stab(P)$$ is equivalent to the inequality $$\widetilde{S}<\frac{(2^d-1)(2^{k+1}-k-2)}{2^{k+1}-1}.$$ Since $2^{k+1}-k-2<2^{k+1}-1$, it follows that $\stab(P')>\stab(P)$ implies $\widetilde{S}<2^d-1$, which is impossible since $P$ is $d$-dimensional. Therefore, in this case, the inequality $$\stab(P')<\stab(P)$$ holds; that is, the stability decreases.
\end{example}


The above reasoning becomes particularly useful when considering a \textit{weighted simplicial complex model}. We briefly recall the definition introduced in \cite{MV21}, presented here in a modified form. Let $I=\{p_1,p_2,\ldots,p_m\}$ be a set of $m$ issues that a group of $k+1$ agents $A=\{a_0,a_1,\ldots,a_k\}$ must evaluate. For each issue $p_j\in I$, we associate a simplicial complex $P_j$, whose faces represent viable configurations for that issue. The resulting \textit{weighted political structure} is then given by the tuple $\mathcal{P}=(P_1,P_2,\ldots,P_m)$.

 The weighted stability
 of a weighted political structure $\mathcal{P}$ is just 
the arithmetic mean of the individual stabilities of the complexes
 $P_1,P_2,\ldots,P_m$, i.e.
 $$\wstab(\mathcal{P})=\frac{1}{m}\Big(\stab(P_1)+\stab(P_2)+\cdots+\stab(P_m)\Big).$$
 
The definition of the weighted version given in \cite{MV21} is sufficient for the purposes there, as it only requires the total number of simplices. Our definition, however, is more informative, as it keeps precise track of which agents are aligned on which question. Moreover, if the complex is shellable, the relevant quantities can be easily computed under any changes in preferences or alignments. In other words, if the simplicial complex changes -- either from question to question or over time -- while remaining shellable, the change in stability can be tracked and the weighted stability can be computed.

If all complexes $P_j$ are shellable, and if $P_{j+1}$ is obtained
from the previous one by adding a few new faces, we can easily
calculate the stabilities and the weighted stability. This in itself is a reasonable motivation to consider shellable structures in this context.

\begin{example}[The boundary of a simplex] Any linear ordering of the
facets of a $k$-dimensional simplex is a shelling order, and the type of the
$i$-th facet in this order is $i-1$. Therefore, we have that
$h_i(\partial \Delta^k)=1$ for all $i=0,1,\ldots,k$. If
$P=\partial\Delta^k$, then the stability of $P$ attains the
maximal nontrivial value among all political structures with $k+1$
vertices
$$\stab(P)=\frac{\sum_{i=0}^{d+1} 2^{d+1-i}h_i-k-2}{2^{k+1}-k-2}=\frac{\sum_{i=0}^{k} 2^{k-i}-k-2}{2^{k+1}-k-2}=\frac{2^{k+1}-k-3}{2^{k+1}-k-2}.$$

Assume that we consider a set $I=\{p_1,p_2,\ldots,p_m\}$ of $m\leq k+1$ issues, and that a subcomplex $P_j$ of $\partial\Delta^k$ is the union of exactly
$j$ facets of $\Delta^k$. Then exactly first $j$ entries $h_0,h_1,\ldots, h_{j-1}$ of $h(P_j)$ equal $1$, while all others are $0$. From  (\ref{E:Stabviah}) we obtain that, for $j>2$,
$$\stab(P_{j})=\frac{2^k+2^{k-1}+\cdots+2^{k-j+1}-k-2}{2^{k+1}-k-2}=
\stab(P_{j-1})+\frac{2^{k-j+1}}{2^{k+1}-k-2}.$$

For $j=2$ this recurrence formula does not hold, since here $P_1$ is the union of a $(k-1)$-simplex and a point, hence not shellable, and its stability equals $$\frac{2^k-k-1}{2^{k+1}-k-2}.$$

If we consider the
set of issues $I=\{p_1,p_2,\ldots,p_m\}$ for some $m\leq k+1$, the
weighted stability of the corresponding political structure $\mathcal{P}=(P_1,P_2,\ldots,P_m)$ is then given by
$$\wstab(\mathcal{P})=\frac{2^k+\frac{m-1}{m}2^{k-1}+\cdots+
\frac{1}{m}2^{k-m+1}+\frac{1}{m}-k-2}{2^{k+1}-k-2}.$$
\end{example}
\end{section}

\begin{section}{The independence complex of a graph as a political structure}
Here we describe a natural way to model political
structures using graphs and provide two illustrative examples.

Consider a graph $G=(V,E)$ with $k+1$
vertices. Assume that the vertices of $G$ represent the agents.
Let any two agents connected by an edge in $G$ be in conflict
and therefore unable to appear together in any viable configuration. In
this setting, the set of all viable configurations corresponds
precisely to the independence complex $I(G)$ of the graph $G$, which is the simplicial complex whose faces are the independent sets of vertices of $G$.

For a formal definition and more on independence complexes, including their connection to shellability, see \cite{Engstrom}.
\begin{example}
Suppose $k+1 = n$ agents are arranged along a path, and each is in
conflict with its immediate neighbors. Then the viable
configurations correspond to independent sets in the path graph
$P_n$. The $f$-vector entries of this political structure $P =
I(P_n)$ count the subsets of $[n]=\{1,\ldots,n\}$ with no consecutive elements.
In particular,
$$f_{i-1}(P)={n-i+1\choose i}.$$
 Using the well-known fact that the total
number of all faces of $I(P_n)$ is $(n+2)$-th Fibonacci number
$F_{n+2}$, we obtain that
$$\stab(P)=\frac{F_{n+2}-n-1}{2^n-n-1}.$$
\end{example}
\begin{example}
Consider $k+1 = 2n$ agents labeled
 by the set $\{1, 2, \ldots, n\} \cup \{-1, -2, \ldots, -n\}$.
 Let each agent $i$ be in conflict solely with agent $-i$. These conflicts can then be represented by a graph consisting of $n$ edges, each connecting $i$ and $-i$ for all $i \in [n]$.
The corresponding political structure is $P=
\partial C^{\triangle}_{n-1}$, the boundary of $(n-1)$-dimensional
crosspolytope $C^{\triangle}_{n-1}$ (the dual of $(n-1)$-cube).
The complex $\partial C^{\triangle}_{n-1}$ has $2^n$ facets, each
corresponding to an ordered partition of $[n]$ into two subsets.
 For a partition $[n]=A\sqcup B$, the facet associated to $(A,B)$
 is defined by $$F_{(A,B)}=\{i:i\in A\}\cup \{-j:j\in B\}.$$
There exists a natural shelling of $\partial C^{\triangle}_{n-1}$,
and we describe it here for the sake of completeness:
\begin{equation}\label{E:Shellocta}
F_{(A,B)}<F_{(A',B')}\Leftrightarrow \left\{%
\begin{array}{ll}
    |B|<|B'|,  \textrm{ or } \\
 |B|=|B'| \textrm{ and }\min\, (B\triangle B') \in B\\
\end{array}%
\right..
\end{equation}

 In this order the type of $F_{(A,B)}$ is simply $|B|$.
 The entries of $h(\partial C^{\triangle}_{n-1})$ thus are given by
$h_i={n \choose i}$. Using equation (\ref{E:Stabviah}), we compute
the stability of $P$ as
\begin{equation*}
\begin{aligned}
\stab(P) 
&= \frac{\displaystyle \sum_{i=0}^{d+1} 2^{d+1 - i} h_i - k - 2}{2^{k+1} - k - 2} \\[10pt]
&= \frac{\displaystyle \sum_{i=0}^{n} 2^{n - i} \binom{n}{i} - (2n - 1) - 2}{2^{2n} - (2n - 1) - 2} \\[10pt]
&= \frac{3^n - 2n - 1}{2^{2n} - 2n - 1}.
\end{aligned}
\end{equation*}
Note that under the shelling order~(\ref{E:Shellocta}) all facets
of type $i$ appear before those of type $j$ for any $0 \leq i <
j$.

We now analyze the following weighted model.  
Consider an agenda $I = \{p_1, p_2, \ldots, p_n\}$ consisting of $n$ issues.  
Assume that $P_j$, the set of viable configurations for the issue $p_j$, for all $1\leq j\leq n$, is the subcomplex of $\partial C^{\triangle}_{n-1}$ spanned by all facets of type at most $j$. 
The facets of $P_j$ form an initial segment of the shelling described in \eqref{E:Shellocta}.
From the formula \eqref{E:Stabviah}, it follows that  
$$\stab(P_j)=\frac{2^n+n2^{n-1}+{n\choose 2}2^{n-2}+\cdots+
{n\choose j}2^{n-j}-2n-1}{2^{2n}-2n-1}.$$ 

Following the reasoning
described after Example \ref{E:shelling-decrease}, and after straightforward calculations, we obtain that the weighted
stability of $\mathcal{P}=(P_1,P_2,\ldots,P_n)$ is
$$\wstab (\mathcal{P})=\stab(P_1)+\sum_{i=2}^n\frac{(n+1-i)
{n\choose i}2^{n-i}}{2^{2n}-2n-1}.$$
\end{example}

\end{section}

\begin{section}{Further questions}\label{S:Future}
\begin{enumerate}
\item[1.] According to the definition of stability of a political structure, two structures with the same number of vertices and the same total number of $(\geq 1)$-dimensional simplices will have the same stability. However, these structures need not be topologically equivalent. Therefore, the stability defined in this manner is not a topological invariant, which makes it problematic if one aims to approach the problem from a more topological perspective.

Structures with $k+1$ vertices are partitioned into $2^{k+1}$ ``stability classes". Even the dimension of the structure does not appear as something relevant.

Can we define a natural refinement of $\stab(P)$ that distinguishes certain political structures with the same number of vertices and the same number of positively-dimensional simplices? Perhaps we can incorporate viability into it, which has been mentioned in passing in \cite{MV21}.  For example, consider the following three structures with five vertices (Figure \ref{F:via}):
\begin{itemize}
\item $P_1$, which consists of two triangles sharing a common vertex;
\item $P_2$, which consists of two triangles sharing a common edge, along with an additional edge connecting one of the vertices not on the common edge to an extra vertex;
\item $P_3$, which consists of two triangles sharing a common edge, with an additional edge connecting one of the vertices on the common edge to an extra vertex.
\end{itemize}

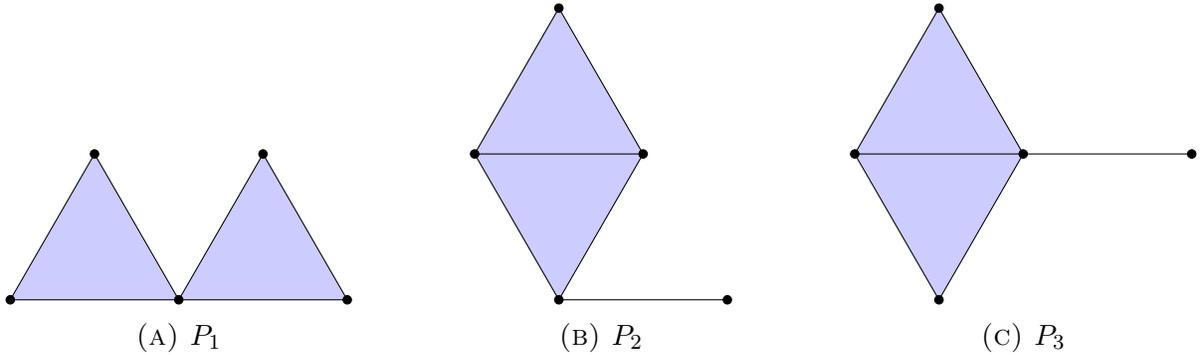
\begin{figure}[h!]
\centering

\begin{subfigure}[t]{0.32\textwidth}
\centering
\begin{tikzpicture}[scale=0.56]
\fill[blue!20] (0,0) -- (4,0) -- (2,3.465) -- cycle;
\fill[blue!20] (8,0) -- (4,0) -- (6,3.465) -- cycle;
\draw (0,0) -- (4,0) -- (2,3.465) -- (0,0);
\draw (4,0) -- (6,3.465);
\draw (4,0) -- (8,0);
\draw (8,0) -- (6,3.465);
\draw[fill] (0,0) circle [radius=0.1];
\draw[fill] (4,0) circle [radius=0.1];
\draw[fill] (2,3.465) circle [radius=0.1];
\draw[fill] (6,3.465) circle [radius=0.1];
\draw[fill] (8,0) circle [radius=0.1];
\end{tikzpicture}
\caption{$P_1$}
\end{subfigure}
\hfill
\begin{subfigure}[t]{0.32\textwidth}
\centering
\begin{tikzpicture}[scale=0.56]
\fill[blue!20] (0,0) -- (4,0) -- (2,3.465) -- cycle;
\fill[blue!20] (0,0) -- (4,0) -- (2,-3.465) -- cycle;
\draw (0,0) -- (4,0) -- (2,3.465) -- (0,0);
\draw (4,0) -- (2,-3.465);
\draw (0,0) -- (2,-3.465);
\draw (6,-3.465) -- (2,-3.465);
\draw[fill] (0,0) circle [radius=0.1];
\draw[fill] (4,0) circle [radius=0.1];
\draw[fill] (2,3.465) circle [radius=0.1];
\draw[fill] (2,-3.465) circle [radius=0.1];
\draw[fill] (6,-3.465) circle [radius=0.1];
\end{tikzpicture}
\caption{$P_2$}
\end{subfigure}
\hfill
\begin{subfigure}[t]{0.32\textwidth}
\centering
\begin{tikzpicture}[scale=0.56]
\fill[blue!20] (0,0) -- (4,0) -- (2,3.465) -- cycle;
\fill[blue!20] (0,0) -- (4,0) -- (2,-3.465) -- cycle;
\draw (0,0) -- (4,0) -- (2,3.465) -- (0,0);
\draw (4,0) -- (2,-3.465);
\draw (0,0) -- (2,-3.465);
\draw (8,0) -- (4,0);
\draw[fill] (0,0) circle [radius=0.1];
\draw[fill] (4,0) circle [radius=0.1];
\draw[fill] (2,3.465) circle [radius=0.1];
\draw[fill] (2,-3.465) circle [radius=0.1];
\draw[fill] (8,0) circle [radius=0.1];
\end{tikzpicture}
\caption{$P_3$}
\end{subfigure}

\caption{Political structures with the same stabilities but different viabilities}
\label{F:via}
\end{figure}

We have $$\stab(P_1)=\stab(P_2)=\stab(P_3).$$

However, the highest viability in the set of viabilities of all agents in $P_1$ is $6/15$, the same as the highest viability in the set of all agents in $P_3$, while the highest viability in the set of viabilities of all agents in $P_2$ is $5/15$. The next highest viability in  $P_1$ is $4/15$, and in $P_3$ it is $6/15$. Therefore, we might say that $\widetilde{\stab}(P_3)>\widetilde{\stab}(P_1)>\widetilde{\stab}(P_2)$
for some refined function $\widetilde{\stab}(P)$. This refined notion of stability can serve as an improvement over the basic definition of stability. 

Another idea to distinguish structures, at least by their $f$-vectors, is to define
$$\widehat{\stab}(P)=\frac{1}{f_0}\sum_{i=0}^{\dim{P}} \frac{f_i}{\binom{f_0}{i+1}}.$$
With this definition, for a fixed total number of simplices, the function $\widehat{\stab}$ favors structures that have more simplices of lower dimension, in the sense that they attain a higher value of $\widehat{\stab}(P)$.



\item[2.] The fact that introducing a mediator increases the stability of a political structure -- that is, $\stab(P)\leq \stab(P*\{v\})$ -- motivates the following

\begin{conj}
Let $P_1$ and $P_2$ be political structures such that $\stab(P_1)<\stab(P_2)$. Then $$\stab(P_1)<\stab(P_1*P_2)<\stab(P_2).$$
\end{conj}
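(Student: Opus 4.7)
The plan is to derive an explicit closed form for $\stab(P_1 * P_2)$ in terms of $\stab(P_1)$, $\stab(P_2)$, and the vertex counts, and then reduce the two inequalities to direct algebraic checks. Writing $n_i$ for the number of vertices of $P_i$, $s_i$ for its number of positive-dimensional simplices, and $D_i = 2^{n_i} - n_i - 1$, so that $\stab(P_i) = s_i/D_i$, every nonempty simplex of $P_1 * P_2$ has the form $\sigma_1 \cup \sigma_2$ with $\sigma_j$ a face of $P_j$ (possibly empty), not both empty. A direct enumeration gives $\widetilde{S}(P_1*P_2) = (s_1+n_1)(s_2+n_2) + s_1 + s_2$, while the factorization $2^{n_1+n_2} = (D_1+n_1+1)(D_2+n_2+1)$ simplifies the corresponding stability denominator to $(D_1+n_1)(D_2+n_2)+D_1+D_2$. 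Substituting $s_i = \stab(P_i)\,D_i$ then presents $\stab(P_1*P_2)$ as a weighted average of the four quantities $\stab(P_1)\stab(P_2)$, $\stab(P_1)$, $\stab(P_2)$, and $1$, with positive weights $w_1=D_1D_2$, $w_2=(n_2+1)D_1$, $w_3=(n_1+1)D_2$, $w_4=n_1 n_2$.

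From this closed form, the lower bound $\stab(P_1)<\stab(P_1*P_2)$ rearranges to
\[
w_1\,\stab(P_1)\bigl(1-\stab(P_2)\bigr) < w_3\bigl(\stab(P_2)-\stab(P_1)\bigr) + w_4\bigl(1-\stab(P_1)\bigr),
\]
and the upper bound is the symmetric statement obtained by swapping the roles of $P_1$ and $P_2$. The strategy is first to dispatch the degenerate cases ($\stab(P_1)=0$, $\stab(P_2)=1$, or one of the $P_i$ a simplex), in which both inequalities collapse to triviality, and then in the generic range $0<\stab(P_1)<\stab(P_2)<1$ to exploit the hypothesis, which in terms of the deficiencies $\bar s_i := D_i - s_i$ reads $s_1\bar s_2 < s_2\bar s_1$.

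The hard part is securing \emph{strict} inequality on the lower side. After clearing denominators the lower bound becomes
\[
\bar s_1\bigl[(n_1+1)s_2 + n_1 n_2\bigr] > 2^{n_1}\,s_1\,\bar s_2,
\]
and one sees that $s_1\bar s_2 < s_2\bar s_1$ alone is insufficient: for instance, with $P_1=\partial\Delta^2$ and $P_2$ the boundary of a $3$-simplex with one $2$-face removed, one has $\stab(P_1)=3/4<9/11=\stab(P_2)$, but a direct count yields $\stab(P_1*P_2)=90/120=\stab(P_1)$, so equality is attained. The conjecture as stated therefore appears to require either an additional hypothesis ruling out such extremal configurations or a relaxation to the non-strict form $\stab(P_1)\le\stab(P_1*P_2)\le\stab(P_2)$; the natural next step is to establish the non-strict version from the weighted-average representation and then describe the equality locus by analyzing precisely when the boundary equation $\bar s_1[(n_1+1)s_2 + n_1 n_2] = 2^{n_1}\,s_1\,\bar s_2$ (and its counterpart for the upper bound) is realized.
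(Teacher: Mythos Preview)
The statement you were given is a \emph{conjecture}, listed among the open questions at the end of the paper; the paper offers no proof. So there is no argument in the paper to compare your proposal against.

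More importantly, your proposal does not prove the conjecture---it disproves it. Your counterexample is correct: with $P_1=\partial\Delta^2$ (so $n_1=3$, $s_1=3$, $\stab(P_1)=3/4$) and $P_2$ equal to $\partial\Delta^3$ with one $2$-face deleted (so $n_2=4$, $s_2=9$, $\stab(P_2)=9/11$), the join $P_1*P_2$ has $7$ vertices and, by your enumeration formula $(s_1+n_1)(s_2+n_2)+s_1+s_2 = 6\cdot 13+12 = 90$ positive-dimensional faces, whence
\[
\stab(P_1*P_2)=\frac{90}{2^7-7-1}=\frac{90}{120}=\frac{3}{4}=\stab(P_1).
\]
Thus the strict lower inequality fails. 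Your weighted-average representation of $\stab(P_1*P_2)$ is also correct and does yield the non-strict bounds $\stab(P_1)\le\stab(P_1*P_2)\le\stab(P_2)$ in general; what you have shown is that the equality locus is nonempty, so the conjecture as stated is false and the correct statement is the non-strict version together with a characterization of when equality occurs.
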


\item[3.] The homology of a political structure can often be easily determined from shellability. 

When a political structure $P$ is shellable, introducing one or more mediators into $P$ preserves shellability and the changes in the $h$-vector -- which remain unchanged when a single mediator is added (that is, if $P$ is shellable, then the cone $CP$ also remains shellable and retains the same $h$-vector) -- can be easily tracked. On the other hand, introducing a mediator into a substructure of $P$ is a more subtle situation regarding shellability and deserves further investigation.

There are also more advanced techniques for establishing shellability and for computing $h$-vectors, such as lexicographic shellings for posets and non-pure shellings. A more systematic use of these methods, and their possible adaptation to the political framework considered here, could be explored in future work.

\end{enumerate}

\end{section}

\bibliographystyle{alpha}

\pagestyle{plain}

\end{document}